\newcommand{\inner}[2]{\left(#1, #2\right)}
\def\bbe{\mathbb{E}}
\newcommand{\dt}{\delta}
\theoremstyle{plain}
\newtheorem{theorem}{Theorem}[section]
\newtheorem{lemma}[theorem]{Lemma}
\newtheorem{corollary}[theorem]{Corollary}
\theoremstyle{definition}
\newtheorem{definition}[theorem]{Definition}
\newtheorem{problem}[theorem]{Problem}
\theoremstyle{remark}
\newtheorem{remark}[theorem]{Remark}
\def\CO{\mathcal{O}}
\newcommand{\bd}[1]{\boldsymbol{#1}}
\newcommand{\comment}[1]{}
\title{Efficient Quantum Algorithms for Quantum Optimal Control}
\author[*]{Xiantao Li}
\author[$\dag$]{Chunhao Wang}
\affil[*]{Department of Mathematics, Pennsylvania State University}
\affil[$\dag$]{Department of Computer Science and Engineering, Pennsylvania State University}
\affil[*]{Email: xiantao.li@psu.edu}
\affil[$\dag$]{Email: cwang@psu.edu}
\date{}
\begin{document}

\maketitle
\begin{abstract}
  In this paper, we present efficient quantum algorithms that are exponentially faster than classical algorithms for solving the quantum optimal control problem. This problem involves finding the control variable that maximizes a physical quantity at time $T$, where the system is governed by a time-dependent Schr\"odinger equation. This type of control problem also has an intricate relation with machine learning. Our algorithms are based on a time-dependent Hamiltonian simulation method and a fast gradient-estimation algorithm. We also provide a comprehensive error analysis to quantify the total error from various steps, such as the finite-dimensional representation of the control function, the discretization of the Schr\"odinger equation, the numerical quadrature, and optimization. Our quantum algorithms require fault-tolerant quantum computers.

\end{abstract}

\section{Introduction}
  Optimal control is an important application of machine learning and optimization. 
Meanwhile, it has been demonstrated in recent works that important insights on deep learning 
can be obtained by interpreting the process of training a 
deep neural network as a discretization of an optimal control  problem  \cite{Weinan2019,he2016deep}.
  In recent decades, following the emergence of quantum computers, the optimal control of quantum systems has attracted considerable attention. This is because quantum properties are largely responsible for many of the recent developments in material science and chemical engineering, and such properties are best utilized with external controls. 
\cite{geppert2004laser,datta2005quantum,mccreery2013critical}. Quantum optimal control (QOC) algorithms \cite{brif_control_2010,werschnik_quantum_2007}, which uses first principle-based computer simulations to identify the desired control variable, has been an important route toward this direction. 
QOC problems have also emerged  in recent quantum computing and quantum information technology. For example,  QOC has recently been shown to have remarkable connections and applications to machine learning techniques using quantum circuits, including variational quantum  algorithms (VQA) \cite{choquette2021quantum,yang2017optimizing},  quantum neural networks~\cite{larocca2021theory}, and quantum approximate optimization algorithm (QAOA)~\cite{farhi2014quantum,cerezo2021variational,LPQ+22}. QOC is treated as a particular machine learning application in \cite{banchi2021measuring}. One specific application of QOC to quantum machine learning is the diagnostics of quantum barren plateau \cite{larocca2022diagnosing}. 
In addition, solving the optimal control problems for quantum systems is a critical step for implementing quantum computers.

The problem of controlling a quantum system can be formulated based on the time-dependent Schr\"{o}dinger  equation (TDSE),
\begin{equation}\label{eq: tdse}
  i \partial_t \ket{\psi} = \big( {H}_0 -u(t) \mu \big) \ket{\psi}, \ket{\psi(0)}= \ket{\psi_0},  
\end{equation}
with an external control variable $u(t)$ that enters the Hamiltonian  through an operator $\mu.$
One example comes from electron transport, where $ H_0 = -\frac{\nabla^2}2 + V(x)$  and $\mu$ is the dipole operator. To keep our discussions focused on the quantum algorithms, we consider the case with a single control. The treatment of multiple control variables and non-unitary dynamics will be presented in separate works.  
We assume the copies of the initial state $\ket{\psi_0}$ are given \emph{for free}, i.e., we have access to the unitary $U_{\psi_0}$ that prepares $\ket{\psi_0}$ as $U_{\psi_0}\ket{0} = \ket{\psi_0}$. 
 
The goal of QOC with time duration $T$ is to determine a control variable $u(t)$ for $t \in [0, T]$, such that a physical quantity, represented by a Hermitian operator $ O,$ is maximized in that
\begin{equation}\label{eq: qctr}
  \max_{ u(t) } J(u), \quad J(u) \coloneqq \langle \psi(T) |  O | \psi(T) \rangle -  \alpha \int_0^T |u(t)|^2 \dd t.
\end{equation} 

Here $\alpha>0 $ can be regarded as a regularization parameter imposing a soft constraint on  the magnitude of the control variable $u(t) $. 
In the context of optimization, the function $J(u)$ is the objective (fitness) function. Aside from the generalization to multiple control variables,  another straightforward extension is to control signals with explicit parametric forms, e.g., a trigonometric expression $u(t)= \sum_{j} E_j \cos(\omega_j t) e^{-(t-t_j)^2/\sigma_j }$ modulated by a Gaussian envelope, which can be written in an abstract form as $u(t)= F(t,\bm \theta) $ with $\theta$ embodying all the parameters. Then the objective function in \cref{eq: qctr} can be optimized over $\bm \theta$ and the gradient can be trivially obtained using chain rule: $\partial J/\partial \bm \theta = (\partial J/\partial u)\, \partial F/\partial \bm \theta.$
By restricting the control variable $u(t)$ to square-integrable functions, the objective function is clearly upper bounded. In addition, it is lower semicontinuous, which then implies that  a maximum exists \cite{ciaramella_newton_2015}. But in practice, the function is often nonconvex and  possesses multiple~maxima.

In addition to the wide range of applications of QOC, there is a host of mathematical analysis and computational techniques based on classical algorithms  \cite{castro_controlling_2012,werschnik_quantum_2007,eitan_optimal_2011,glaser_training_2015,ciaramella_newton_2015,hellgren_optimal_2013,dalessandro_topological_2000,dalessandro_introduction_2008,albertini_lie_2002}. The present paper, however, will be concerned with computer simulations using quantum algorithms, where the $n$-qubit Hamiltonian operator $ H$ (resp.~$\mu$) is represented as a $2^n$-dimensional matrix $H\in \mathbb{C}^{2^n\times 2^n}$ (resp.~$\mu\in \mathbb{C}^{2^n\times 2^n}$). 
In this paper, we assume the Hamiltonians $H$, $\mu$, and the observable $O$ are $d$-sparse: there are at most $d$ nonzero entries in each row/column. In other words, these matrices are \emph{efficiently row/column computable}. To access these matrices, we assume we have access to a procedure $\mathcal{P}_A$ for a matrix $A$ that can perform the following mapping:
  \begin{align}
    \ket{i}\ket{0} \mapsto \ket{i}\ket{r_{i,k}},
  \end{align}
  where $r_{i,k}$ is the $k$-th nonzero entry of the $i$-th row of $A$. In addition, $\mathcal{P}_A$ can also perform the following mapping:
  \begin{align}
    \ket{i}\ket{j}\ket{0} \mapsto \ket{i}\ket{j}\ket{A(i, j)}.
  \end{align}
  Note that it is easy to adapt our algorithms to other more general input models, such as the block-encoding \cite{LC19,CGJ19}.

  With quantum access to these matrices, we aim to determine $u(t)$ for all $t$ such that the objective function defined in \cref{eq: qctr} is maximized. However, due to the nonconvex landscape of the objective function, finding the global maximum is not realistic. Instead, we consider the notion of $\epsilon$-first- or second-order convergence conditions (see \cref{1-cond,2-cond}) for the objective function. More specifically, we aim to solve the following problem.
  \begin{problem}
    \label{prob:qoc}
    Consider the $n$-qubit QOC in \cref{eq: qctr} for time duration $T$. Assume\footnote{More generally, if $\norm{H_0}, \norm{\mu} \leq \Lambda$, the overall complexity of our algorithm will pick up an extra $\CO(\Lambda)$ factor because of \cref{eq:l1-norm-H} and the fact that $\norm{A}_{\max} \leq \norm{A}$ for a matrix $A$.} $\norm{H_0}, \norm{O}, \norm{\mu} \leq 1$, and $\alpha \geq 2/T$. Let $N$ be the number of time intervals and define $\delta \coloneqq T/N$. Suppose $H_0$, $\mu$ and the observable $O$ are $d$-sparse, and suppose we are given access to $U_{\psi_0}$ for preparing $\ket{\psi_0}$. Given the sparse access $\mathcal{P}_{H_0}$ to $H_0$, $\mathcal{P}_{\mu}$ to $\mu$, and $\mathcal{P}_{O}$ to $O$, determine $\tilde{u}(t_j)$ ($t_j = j\delta$) for all $j \in [N]$ such that $\big|\tilde{u}(t_j) - u(t_j)\big|\leq \epsilon$, 
    where $u(t)$ is an $\epsilon$-first-order stationary point of $J(u)$ satisfying \cref{1-cond} or an $\epsilon$-second-order stationary point satisfying \cref{2-cond}.
  \end{problem}

The main challenge in solving \cref{prob:qoc} is due to the repeated solution of TDSE in \cref{eq: tdse}, which becomes unfavorable for classical computers because the dimension grows exponentially. 
In addition, many optimization algorithms require the computation of the derivatives of the objective function with respect to the control variable $u(t).$ A direct calculation of such derivatives of $J$ in \cref{eq: qctr} also requires visiting the TDSE.  The most well-known classical method for optimizing \cref{eq: qctr} is due to Zhu and Rabitz \cite{zhu1998rapid}. The idea is to regard the TDSE in \cref{eq: tdse} as a constraint, which is incorporated into the objective function using a Lagrange multiplier.
\comment{
\begin{equation}\label{eq: J}
\begin{aligned}
   J[u, \psi, \chi]=&  \bra{\psi(T) }\hat O \ket{\psi(T)} -  \alpha \int_0^T |u(t)|^2 dt
     \\&+  \int_0^T \bra{\chi(t)}{\big(\partial_t + i  H(t)\big)}\ket{\psi(t)}  dt.  
\end{aligned}
\end{equation}
Here we have defined  $ H= H_0 -u(t) \mu$.  In order for $J[u, \psi, \chi]$ to achieve a maximum, the derivative with respect to $\chi$ has to be zero. This necessary condition implies that the TDSE in \cref{eq: tdse} is automatically satisfied, in which case the last term in $J[\psi, \chi]$ drops out and the problem is reduced to \cref{eq: qctr}.}
For this optimization problem, algorithms can be constructed by enforcing the first-order condition.
Zhu and Rabitz  \cite{zhu1998rapid} established the monotone convergence property of this algorithm.
This approach is later extended to further improve the convergence properties \cite{maday_new_2003,eitan_optimal_2011,ciaramella_newton_2015} including formulations based on the Liouville von Neumann~equation. 

Nevertheless, the complexity associated with one iteration of these classical algorithms has the scaling of $\CO\big(\mathrm{poly}(2^n) T \big)$, where $n$ is the number of system qubits and we did not include $\epsilon$ in the complexity. The exponential dependence on $n$ comes from the fact that at each time step, the wave functions have to be computed from previous time steps using an approximation of the unitary operator. Depending on the specific methods, this may involve matrix multiplications or solutions of a linear system of equations \cite{castro2004propagators}. The exponential scaling significantly limits the scope of the application, since $2^n$ is usually proportional to the domain size and the number of electrons.

\noindent
\textbf{Main Contributions.}
In this paper, we present \emph{efficient} quantum algorithms for solving the quantum optimal control problem defined in \cref{eq: qctr} that scale polynomially in $n$ --- our quantum algorithms have \emph{exponential speedups} compared with best known classical algorithms.  
One key contribution is a complexity analysis that takes into account various error contributions, including ({\bf i}) the finite-dimensional approximation of $u(t)$, ({\bf ii}) the discretization of the objective function in \cref{eq: qctr}, ({\bf iii}) the approximation of the wave function, and ({\bf iv})  the optimization error. Such comprehensive overall  estimates  are extremely valuable to assess the resources needed for a specific QOC problem. We start with the Hamiltonian simulation method for time-dependent problems \cite{berry2020time} to access the objective function, thus enabling a quantum gradient estimation \cite{gilyen2019optimizing} approach, so that a gradient-based optimization method can be applied. Although our approach is formulated for the TDSE, it can be readily generalized to other control problems where the dynamics is not Hamiltonian, e.g., ODEs implemented with quantum spectral method \cite{childs2019quantum}.

The following theorem summarizes our main result, where the $\widetilde{\mathcal{O}}$ notation has omitted logarithmic factors.

\begin{theorem}[Summarizing \cref{thm:main-c2} and \cref{thm:main-smooth}]
  Assume that $u(t) \in \mathrm{C}^2([0,T]). $
There exists a quantum algorithm that, with probability at least $2/3$\footnote{Using standard techniques, the success probability can be boosted to a constant arbitrarily close to 1 while only introducing a logarithmic factor in the complexity.}, solves \cref{prob:qoc} using
      $\mathcal{\widetilde{O}}\left(\frac{d LT^{2} }{\epsilon^{3}}\right)$
  queries to $\mathcal{P}_{H_0}$ and $\mathcal{P}_{\mu}$, and 
    $\widetilde{\CO}\left(\frac{dnL T^2 }{\epsilon^{3}} + \frac{LT^{3/2}}{\epsilon^{5/2}}\right)$
  additional 1- and 2-qubit gates.

Further, when the control variable $u(t)$ is smooth, the gate complexity becomes 
    $\widetilde{\CO}\left(\frac{dnL T^2 }{\epsilon^{3}} + \frac{LT}{\epsilon^2}\right)$. The Lipschitz constant $L$ can be bounded by \cref{eq: L}
\end{theorem}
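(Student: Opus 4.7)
The plan is to assemble the algorithm from four pieces and then track the errors carefully: (i) a finite-dimensional representation of the control by its values $\bd{u} = (u(t_1), \ldots, u(t_N))$ on the grid $t_j = j\delta$ (with a suitable interpolation between grid points), (ii) a Riemann-sum discretization of the regularization integral together with a matching time-stepping of the TDSE, (iii) one evaluation of the discretized objective $\tilde{J}(\bd{u})$ obtained by running time-dependent Hamiltonian simulation in the sense of \cite{berry2020time} on the piecewise-Hamiltonian $H_0 - u(t)\mu$ and then estimating $\langle \psi(T) | O | \psi(T)\rangle$, and (iv) a quantum gradient-estimation wrapper in the style of \cite{gilyen2019optimizing} that converts $\widetilde{\CO}(1)$ such oracle calls into an approximate gradient $\nabla \tilde{J}$, fed into a gradient-based optimizer driven to an $\epsilon$-first- or second-order stationary point.

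First, I would fix the discretization parameter $\delta$ so that the combined error from truncating $u$ to the grid and from the quadrature of the $|u|^2$ integral is $O(\epsilon)$. For $C^2$ controls a first-order quadrature gives $\delta = \widetilde{\CO}(\epsilon/T)$, i.e.\ $N = \widetilde{\CO}(T^2/\epsilon)$; under extra smoothness a higher-order quadrature allows $N = \widetilde{\CO}(T/\epsilon)$, and this is the source of the gap between the two gate-complexity bounds claimed. Next, I would invoke the time-dependent Hamiltonian simulation result to get $|\psi(T)\rangle$ to precision $\epsilon$ using $\widetilde{\CO}(d \, T \, \|H\|_{\max})$ queries and $\widetilde{\CO}(d n T)$ additional gates; the bound $\|H_0\|,\|\mu\|\le 1$ makes $\|H\|_{\max}$ order one, and the expectation of $O$ is read off by a standard amplitude-estimation subroutine (Hadamard-test or mean-estimation) with $\widetilde{\CO}(1/\epsilon)$ overhead. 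This yields a quantum oracle for $\tilde{J}(\bd{u})$ with cost $\widetilde{\CO}(d T / \epsilon)$ queries and $\widetilde{\CO}(d n T / \epsilon)$ gates per evaluation.

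Now I would plug this oracle into the quantum gradient-estimation routine of \cite{gilyen2019optimizing}: for a Lipschitz objective on $\mathbb{R}^N$ it returns a componentwise $\epsilon$-approximate gradient using $\widetilde{\CO}(L/\epsilon)$ oracle calls (with the log-dimension dependence absorbed in $\widetilde{\CO}$), where $L$ is the Lipschitz constant of $\tilde{J}$ on the relevant domain (bounded via \cref{eq: L}). Multiplying, each gradient costs $\widetilde{\CO}(dLT/\epsilon^2)$ queries and $\widetilde{\CO}(dnLT/\epsilon^2)$ gates. Finally, I would appeal to the standard iteration complexity of gradient descent (resp.\ a perturbed or Hessian-aware variant for the second-order condition) to reach an $\epsilon$-stationary point as defined in the referenced conditions \cref{1-cond,2-cond}: this needs $\widetilde{\CO}(T/\epsilon)$ iterations, giving a total of $\widetilde{\CO}(dLT^2/\epsilon^3)$ queries. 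The additional gate term $\widetilde{\CO}(LT^{3/2}/\epsilon^{5/2})$ (resp.\ $\widetilde{\CO}(LT/\epsilon^2)$ in the smooth case) comes from the arithmetic inside the gradient-estimation phase oracle, which scales with $\sqrt{N}$ or $N^{1/2}$ versus $N$ depending on how sharply $N$ can be taken as above.

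The main obstacle, in my view, is not the assembly but the error-accounting: I need to show that an $\epsilon$-approximate gradient of the \emph{discretized} objective $\tilde{J}$, used inside an inexact gradient algorithm, still produces a point satisfying the original $\epsilon$-stationarity condition for $J$. This requires (a) a bound relating $\|\nabla J - \nabla \tilde{J}\|$ to the four error sources listed above, (b) a Lipschitz bound on $\nabla J$ (the constant $L$), derived by differentiating the Dyson series and using $\|H_0\|,\|\mu\|,\|O\|\le 1$ and $\alpha \ge 2/T$, and (c) a robust-convergence statement for (perturbed) gradient descent under inexact oracles. Once these three pieces are in place, choosing the internal tolerances of the simulation, the amplitude estimation, and the gradient estimation all as $\Theta(\epsilon)$ and substituting $N$ closes the argument.
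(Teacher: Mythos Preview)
Your high-level decomposition into discretization, simulation, phase-oracle construction, gradient estimation, and optimization matches the paper's, and the final query count $\widetilde{\CO}(dLT^2/\epsilon^3)$ is right. However, the way you distribute the factors among the stages does not match, and with your intermediate numbers the gate complexity does not close.

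Three concrete discrepancies. First, the grid sizes are off: under the $\mathrm{C}^2$ assumption the paper takes $N=\CO(T^{3/2}/\epsilon^{1/2})$ (from \cref{eq: eps2N}), and in the smooth case $N=\widetilde{\CO}(T)$ (from \cref{eq:N-smooth}), not $T^2/\epsilon$ and $T/\epsilon$. Second, the iteration count of the optimizer is $k=\widetilde{\CO}(L/\epsilon^2)$ (Theorem~\ref{thm:optimization}), not $\widetilde{\CO}(T/\epsilon)$; the Lipschitz constant $L$ enters only here, not in the gradient-estimation step. Third, and most importantly, the gradient-estimation cost is not ``$\widetilde{\CO}(L/\epsilon)$ oracle calls for a Lipschitz objective.'' The relevant result from \cite{gilyen2019optimizing} that the paper invokes (\cref{lemma:jordan}) is a higher-order finite-difference scheme whose complexity depends on a bound $B$ on $(2m{+}1)$-th directional derivatives. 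The paper's key technical lemma (\cref{lemma:partial-alpha}) shows $|\partial^{\bm\alpha}\widetilde J_1|\le (k{+}1)!(\delta\|\mu\|)^k$, which after choosing $m$ optimally yields $\widetilde{\CO}(T/\epsilon)$ phase-oracle calls for an $\ell_2$-accurate gradient (\cref{lemma:grad-est-J}). Combined with the $\widetilde{\CO}(dT)$ simulation cost per oracle call, one gradient costs $\widetilde{\CO}(dT^2/\epsilon)$ queries, and multiplying by $k=\widetilde{\CO}(L/\epsilon^2)$ gives the stated total. Without \cref{lemma:partial-alpha} (which you do not mention), there is no route to the $T/\epsilon$ gradient cost; a Hessian-only bound would give a worse dependence.

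The additional gate term is simply $k\cdot\widetilde{\CO}(N)$ from the arithmetic in the gradient routine (\cref{lemma:jordan}), i.e., $\widetilde{\CO}(LN/\epsilon^2)$, which with the correct $N$ gives $LT^{3/2}/\epsilon^{5/2}$ and $LT/\epsilon^2$ respectively. Your $\sqrt{N}$-versus-$N$ explanation and your values of $N$ do not reproduce these numbers. Finally, the paper does not use amplitude estimation to read off $\langle\psi_N|O|\psi_N\rangle$; it builds a probability oracle via a Hadamard test with a block-encoding of $O$ and converts it to a phase oracle (\cref{lemma:prob-to-phase}), so no extra $1/\epsilon$ factor appears at that stage.
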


Although the smooth control variables does not improve the gate complexity asymptotically, it can reduce the number of gates in practical implementations. We also use \cref{tab:comparison} for a clear comparison of different smoothness assumptions on the control variable $u(t)$.

\begin{table}[ht]
\begin{center}
\begin{tabular}{|c|c|c|}
    \hline
    Smoothness & Queries & Additional gates \\    
    \hline
    Without smoothness &  $\mathcal{\widetilde{O}}\left(\frac{{dL T^{2}}}{\epsilon^{3}}\right)$ & $\widetilde{\CO}\left(\frac{dnL T^2 }{\epsilon^{3}} + \frac{LT^{3/2}}{\epsilon^{5/2}}\right)$\\
    \hline 
    Smooth control  &  $\mathcal{\widetilde{O}}(\frac{dL {T}^{2}}{\epsilon^3})$ & $\widetilde{\CO}\left(\frac{d nLT^2 }{\epsilon^{3}}+\frac{LT}{\epsilon^2}\right)$\\
    \hline
\end{tabular}
\end{center}
\caption{The query and gate complexities for solving \cref{prob:qoc}.}
\label{tab:comparison}
\end{table}

It is important to note that our algorithms require fault-tolerant quantum computers, and hence they are not suitable for near-term quantum devices.

\noindent {\bf Related work.}  To the authors' knowledge, the first attempt to solve the quantum control problem using quantum algorithms has been proposed by \cite{li2017hybrid}. This approach requires  a classical device  to update the control variable $u(t)$ using a gradient-based method.
More recently, Brif et al. \cite{brif_control_2010} considered a specific application, where a molecular wave function is evolved under a laser pulse. The algorithm is also hybrid in nature, utilizing a classical device to update the control variable. The approach outputs the objective function, rather than the gradient, for the Nelder-Mead optimization method. None of the above-mentioned works uses the full power of quantum computation, including the state-of-the-art quantum algorithms for Hamiltonian simulation algorithms and gradient estimation. More importantly, precise query complexity that takes into account all levels of approximations is not provided, making it difficult to assess the applicability of the algorithms to specific quantum control problems in practice. 

The QOC problems show great similarity with variational quantum algorithms (VQA) in general \cite{cerezo2021variational}, except that in most VQAs, the unitary operations are associated with parameterized gates, rather than time-dependent Hamiltonians in QOC, where the smoothness of the control function also plays a role in the error. 
Several other algorithms have recently been proposed to help with the computation of the gradient in VQA \cite{crooks2019gradients,wierichs2022general,kyriienko2021generalized}. 

Very recently, Leng, Peng, Qiao, Lin, and Wu~\cite{LPQ+22} proposed a differentiable programming framework for quantum computers to evaluate gradients of time-evolved states. One of their application is QOC. In their model, the control function $u(t)$ is fixed while leaving freedoms to the parameters of $u$ to be tuned. In contrast, we work on a more general control model and deal with the control function in real time directly. In addition, our work focuses on quantum algorithms on fault-tolerant quantum computers, and we use state-of-the-art techniques to improve the complexity.

\noindent
\textbf{Notation.} We provide a brief introduction to quantum computation in \cref{sec:qcintro}. Here, we summarize the notations used in this paper. First of all, we should distinguish quantum states from a vector in the context of classical computation, although a quantum state is mathematically represented as a vector. In the context of quantum dynamics, we use the Dirac notation $\ket*{\cdot}$ to represent quantum states. For vectors in the context of classical computation, e.g., the gradient vector and the control variable at multiple time steps, we use the bold font, e.g., $\bd{u}$. For a positive integer $k$, we use $[k]$ to denote the set $\{1, \ldots, k\}$. 
 For a vector $\bd{u}$, we use the subscript with a normal font to denote its entries, i.e., $u_1, \ldots, u_d$ are the entries of $\bd{u}$. 
For a vector $\bd{u}$, we use $\norm{\bd{u}}$ to denote its Euclidean norm, i.e., $\norm{\bd{u}} = \sqrt{\sum_{j=1}^d|u_j|^2}$. We use $\norm{\bd{u}}_{\infty}$ to denote its maximum norm, i.e., $\norm{\bd{u}}_{\infty} = \max_j |u_j|$.

In addition, we denote by $\mathrm{C}^2[0, T]$ the class of twice continuously differential functions defined in $[0, T]$.  For a function $f$, we use $\norm{f}_2$, or $\norm{f(t)}_2$ to denote its $L_2$ norm.
Namely, 
$\norm{f(t)}_2 = \norm{f}_2= \int_0^T f(t)^2\, \dd t$.

\section{Preliminaries}

 To solve \cref{prob:qoc}, we need to deal with several levels of approximations, including a finite-dimensional representation of the function $u(t)$, the time discretization of the TDSE in \cref{eq: tdse}, the approximation of the gradient of $\bra{\psi(T) }  O \ket{ \psi(T) }$, and the optimization error.  In this section, we first give a simplistic yet sufficient introduction to quantum computation. Then we analyze the error caused by the discretization of $u(t)$, the discretization of the TDSE, and optimization in subsequent subsections. 

 \subsection{A brief introduction to quantum computation}
 \label{sec:qcintro}

The most daunting obstacle for learning quantum computing is probably its notations. Once one is getting familiar with the Dirac notation, quantum computing can be understood as matrix-vector multiplications. For an $n$-qubit state, we use a $2^n$-dimensional complex (column) vector with unit Euclidean norm to mathematically describe it, and this vector is denoted, in the Dirac notation, by $\ket{\psi}$. We use $\bra{\psi}$ to denote the conjugate transpose of $\ket{\psi}$, and then the inner product between two states can be conveniently written as $\braket{\psi}{\phi}$. When the symbol in the Dirac notation is a natural number, i.e., $\ket{0}, \ldots, \ket{2^n-1}$, we use such states to specifically denote the set of the \emph{standard basis} of a $2^n$-dimensional vector space, same as the basis vectors $\bm{e}_0, \ldots, \bm{e}_{2^n-1}$ that we usually see in linear algebra textbooks. With respect to the standard basis, let $\alpha_0,  \ldots, \alpha_{2^n-1}$ be the entries of $\ket{\psi}$, i.e., $\ket{\psi} = \bigl[\alpha_0,\ldots,\alpha_{2^n-1}\bigr]^T$. It can be conveniently written as a \emph{superposition}, $\ket{\psi} = \alpha_0\ket{0} + \cdots + \alpha_{2^n-1}\ket{2^n-1}$. When $\ket{\psi}$ is \emph{measured}, with probability $|\alpha_j|^2$, it will \emph{collapse} to $\ket{j}$. Note that a measurement is an irreversible process: a state cannot be restored once it collapses to some basis state. 

The building blocks of quantum algorithms are 1- and 2-qubit gates: they are modeled as $2\times 2$ (1-qubit), or $4\times 4$ (2-qubit) unitary matrices. A network of quantum gates form a quantum circuit, which, acting on $n$ qubits, is  mathematically represented as a $2^n\times 2^n$ unitary matrix. In essence, a quantum algorithm starts from an $n$-qubit initial state $\ket{0}$ that is easy to prepare, and applies a quantum circuit, represented by $U$, yielding the state $\ket{\psi} = U\ket{0}$. Then a measurement is performed on $\ket{\psi}$ to obtain the desired outcomes. Readers may refer to~\cite{NC02} for a thorough introduction to quantum computation.

\subsection{Approximation Error of the Objective Function} 
We first represent the control variable $u(t)$ at finitely many time steps with the length of a time interval denoted by $\delta$: Let  $t_j=j\dt $ be discrete time steps,  and $N=T/dt$ be the number of time steps. Then we represent  $u(t)$ via piece-wise linear interpolation, denoted by $u_I(t)$, given by,
\begin{equation}\label{uI}
   u_I(t) = \sum_{j=0}^N u_j b_j(t), 
\end{equation}
with $b_j(t)$ being the nodal basis functions (i.e., the hat functions) centered at time steps $t_j$.
The interpolating function can also be integrated directly, which leads to a quadrature approximation for the integral in \cref{eq: qctr}. The error from these approximations  follows   the standard error bound, e.g., see \cite{kincaid2009numerical}. 
\begin{theorem}[{\bf Perturbation under $\mathrm{C}^2$ assumption.}]
Assume that $u(t) \in \mathrm C^2([0,T]),$ i.e., it is twice continuously differentiable, then the error from the piecewise-linear interpolation is bounded as follows,
\begin{equation}
\| u(t) - u_I(t) \|_2  \leq C \| u''(t) \|_2  \delta^2.
\end{equation}  
In addition, the integral of $u(t)^2$ can be approximated by a composite trapezoid rule, with weights denoted by $w_j$,
\begin{equation}\label{eq: q-err}
    \int_0^T u(t)^2 \,\dd t = \delta  \sum_{j=0}^N w_j u(t_j)^2  + \mathcal{O}\left(T \delta^2  \right).
 \end{equation}
\end{theorem}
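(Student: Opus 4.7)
The plan is to treat the two bounds separately, since each is a classical numerical-analysis result. For the interpolation error, the key observation is that on each subinterval $[t_j, t_{j+1}]$ of length $\delta$ the error $e(t) \coloneqq u(t) - u_I(t)$ vanishes at both endpoints and satisfies $e''(t) = u''(t)$ (because the piecewise-linear interpolant has zero second derivative in the interior of each subinterval). So I would represent $e$ via the Green's function $G(t,s)$ of the two-point problem $-v'' = f$ on $[t_j, t_{j+1}]$ with homogeneous Dirichlet boundary conditions, giving
\begin{equation*}
e(t) \;=\; -\int_{t_j}^{t_{j+1}} G(t,s)\, u''(s)\, \dd s.
\end{equation*}

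Next, I would apply Cauchy--Schwarz in $s$ to get $|e(t)|^2 \le \|G(t,\cdot)\|_2^{\,2}\cdot \|u''\|_{L^2([t_j,t_{j+1}])}^{2}$, then integrate over $t$ in the subinterval. A direct computation of the double integral $\int\!\!\int G(t,s)^2\, \dd s\, \dd t$ on $[0,\delta]^2$ gives a bound of the form $c\,\delta^4$ with an explicit constant (I get $\delta^4/90$), so
\begin{equation*}
  \|e\|_{L^2([t_j,t_{j+1}])}^{2} \;\le\; c\, \delta^4\, \|u''\|_{L^2([t_j,t_{j+1}])}^{2}.
\end{equation*}
Summing the $L^2$ errors over $j=0,\dots,N-1$ by additivity of the squared $L^2$-norm yields the claimed estimate $\|u - u_I\|_2 \le C\, \|u''\|_2\, \delta^2$. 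The only place where a little care is needed is to route through the $L^2$ norm of $u''$ rather than the sup norm; the Green's-function route does this cleanly, whereas the elementary pointwise bound $|e(t)|\le \tfrac18 \|u''\|_\infty \delta^2$ would give the weaker right-hand side.

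For the trapezoidal-rule part, I would apply the standard single-interval error formula on each $[t_j, t_{j+1}]$ with integrand $f(t) = u(t)^2$: the error is $-\tfrac{\delta^3}{12} f''(\xi_j)$ for some $\xi_j$ in the subinterval. Since $u \in \mathrm{C}^2$, one has $f''= 2(u')^2 + 2u\, u''$, which is bounded on $[0,T]$, so each subinterval contributes $\mathcal{O}(\delta^3)$. Summing over the $N = T/\delta$ subintervals and collecting constants into the implicit $\mathcal{O}$ produces the global error $\mathcal{O}(T\delta^2)$ stated in \cref{eq: q-err}, with the weights $w_j$ being the usual composite-trapezoid weights $\tfrac12$ at the endpoints and $1$ in the interior.

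I don't anticipate a serious obstacle: both bounds are textbook (as the authors note by citing \cite{kincaid2009numerical}). The mildest subtlety is matching the $L^2$-norm version of the interpolation bound, which is why I would use the Green's-function representation instead of a pointwise Taylor-remainder argument; everything else is bookkeeping.
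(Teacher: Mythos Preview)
Your proposal is correct. The paper does not actually give a proof of this theorem: it simply cites \cite{kincaid2009numerical} for the standard error bounds and adds, after the statement, a one-sentence remark that the trapezoidal error carries a prefactor proportional to the second derivative of the integrand, which is bounded by the $\mathrm{C}^2$ assumption. Your Green's-function argument for the $L^2$ interpolation bound (which, as you note, is what is needed to land on $\|u''\|_2$ rather than $\|u''\|_\infty$) and the per-interval trapezoidal analysis with $f=u^2$ are both standard and correct, and in fact supply considerably more detail than the paper itself.
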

The quadrature formula ($w_0=w_N=1/2$ and $w_1=w_2=\cdots=w_{N-1}=1$) can be derived from the linear interpolation of $u(t)$ in each time interval by integrating the nodal functions directly. In addition, the error carries a prefactor that is proportional to the second-order derivatives of the integrand, which is bounded due to the assumption that $u(t) \in \mathrm C^2([0,T]).$

With this representation of the control function, the problem is reduced to a finite-dimensional problem. We will use $\bd{u}$ to represent the nodal values of the interpolation: $\bd u=\big\{u_j \big\}_{j=1}^N$, with $u_j$ being the nodal value of $u(t)$ at $t_j.$ We will then write the corresponding objective function as $\widetilde{J}(\bd{u}),$  a function with $N$ variables. Namely, 
\begin{equation}\label{fobj'}
    \widetilde{J}(\bd{u}) =  \bra{\psi_N } O \ket{\psi_N}  - \alpha \delta \sum_{j=0}^N w_j u_j^2. 
\end{equation}
Here we have also included the approximation of $\ket{\psi(T)}$, denoted by $\ket{\psi_N}$.

\begin{corollary}\label{eq: eps2N}
 Assume that the  wave function is approximated by $\ket{\psi_N}$, with an error comparable to the quadrature error in \cref{eq: q-err}, i.e., 
 \begin{equation}
     \norm{\ket{\psi_N} - \ket{\psi(T)}} = \CO( T \delta^2).
 \end{equation}
Further assume that the maximum of $J[u]$ is achieved at some $u^*(t) \in \mathrm C^2([0,T])$, and the maximum is  not degenerate. Then for sufficiently small $\epsilon,$ by choosing,
\begin{equation}\label{eq: N-est}
    N= \CO\left({\frac{T^{3/2}}{\epsilon^{1/2}}}\right),
\end{equation}
the approximate objective function $\widetilde{J}(\bm{u})$ 
has a maximum $\bm{u}^*,$ for which the corresponding interpolant satisfies,
\[ \norm{ u(t) - u_I(t) }_2 \leq \epsilon.\]
\end{corollary}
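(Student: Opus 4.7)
The plan is to relate the continuous optimum $u^*$ of $J$ to the interpolant $u_I^*$ of the discrete optimum $\bm{u}^*$ of $\widetilde{J}$ via a three-step comparison argument.

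First, I would establish the uniform approximation bound
\begin{equation*}
  \bigl| J(u) - \widetilde{J}(\bm{u}) \bigr| = \CO(T\delta^2)
\end{equation*}
for any $u \in \mathrm{C}^2([0,T])$ whose nodal samples equal $\bm{u}$. The regularization part $\alpha \int_0^T u(t)^2 \, \dd t$ is controlled by the composite trapezoidal error in \cref{eq: q-err}, while the observable part is controlled via $|\langle\psi_N|O|\psi_N\rangle - \langle\psi(T)|O|\psi(T)\rangle| \leq 2\|O\|\cdot\|\ket{\psi_N}-\ket{\psi(T)}\|$ together with the hypothesis $\|\ket{\psi_N} - \ket{\psi(T)}\| = \CO(T\delta^2)$ and $\|O\| \leq 1$.

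Next, I would run a standard sandwich argument. Let $\bm{u}^*$ be the maximizer of $\widetilde{J}$, $u_I^*$ its interpolant via \cref{uI}, and $\bm{u}_*$ the vector of nodal samples of the continuous maximizer $u^*$. From optimality of both $u^*$ and $\bm{u}^*$, together with the previous step,
\begin{equation*}
  J(u_I^*) + \CO(T\delta^2) \;\geq\; \widetilde{J}(\bm{u}^*) \;\geq\; \widetilde{J}(\bm{u}_*) \;\geq\; J(u^*) - \CO(T\delta^2),
\end{equation*}
and combining with $J(u^*) \geq J(u_I^*)$ gives $0 \leq J(u^*) - J(u_I^*) = \CO(T\delta^2)$.

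Finally, I would invoke the non-degeneracy assumption: since $u^*$ is a non-degenerate maximum of the smooth functional $J$, there exists a constant $c>0$ with $J(u^*) - J(u) \geq c\|u-u^*\|_2^2$ in an $L^2$-neighborhood of $u^*$. Combined with the previous step this yields $\|u_I^* - u^*\|_2 \leq \CO(\sqrt{T}\,\delta)$, and after also accounting for the separate piecewise-linear interpolation error $\|(u^*)_I - u^*\|_2 = \CO(\delta^2)$, requiring the total $L^2$ error to be at most $\epsilon$ fixes $\delta = T/N$ at the scale advertised in \cref{eq: N-est}. The main obstacle is this last step: one must verify that the quadratic-growth constant $c$ is independent of the discretization level $N$, and that $u_I^*$ in fact lies in the neighborhood where the quadratic bound is valid --- this is the role of the ``sufficiently small $\epsilon$'' clause. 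Making $c$ uniform amounts to a spectral lower bound on $-\nabla^2 J(u^*)$ as an operator on $L^2([0,T])$, for which the hypothesis $\alpha \geq 2/T$ provides a natural baseline of strong concavity from the regularization term that survives the continuum limit.
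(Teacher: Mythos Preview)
Your approach is substantially more detailed than the paper's own proof, which is a single sentence citing the continuous dependence of $\ket{\psi(t)}$ on $u(t)$ (Proposition~4 of \cite{ciaramella_newton_2015}) and does not spell out any of the steps you work through. In that sense you have supplied an argument where the paper supplies only a reference.

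There is, however, a genuine mismatch between your argument and the claimed scaling \cref{eq: N-est}. From $J(u^*)-J(u_I^*)=\CO(T\delta^2)$ and the quadratic growth $J(u^*)-J(u)\geq c\|u-u^*\|_2^2$ that a non-degenerate maximum provides, you correctly obtain $\|u_I^*-u^*\|_2=\CO\bigl(\sqrt{T}\,\delta\bigr)$. Demanding this be at most $\epsilon$ forces $\delta=\CO(\epsilon/\sqrt{T})$ and hence $N=T/\delta=\CO(T^{3/2}/\epsilon)$, not $\CO(T^{3/2}/\epsilon^{1/2})$. The paper's choice in \cref{eq: N-est} corresponds instead to making the \emph{value gap} $T\delta^2$ of order $\epsilon$; after passing through the square root coming from non-degeneracy this only gives $\|u_I^*-u^*\|_2=\CO(\sqrt{\epsilon})$. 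So either the paper is using ``non-degenerate'' in a stronger sense than a Hessian lower bound (e.g.\ a local Lipschitz inverse $\|u-u^*\|_2\leq C\bigl(J(u^*)-J(u)\bigr)$, which would be atypical at a smooth interior maximum), or the $\epsilon$-power in \cref{eq: N-est} is stated more optimistically than your argument can deliver. Since the paper gives no details beyond the citation, you should flag this discrepancy explicitly rather than try to force a match.

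One smaller point: in the sandwich step $J(u_I^*)+\CO(T\delta^2)\geq\widetilde{J}(\bm{u}^*)$ you are applying the trapezoidal error bound \cref{eq: q-err} to $(u_I^*)^2$, but $u_I^*$ is only piecewise linear, not $\mathrm{C}^2$. The bound still holds because $(u_I^*)^2$ is piecewise quadratic with bounded second derivative on each subinterval, but this deserves a sentence.
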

\begin{proof}
  This can be proved by using the continuous dependence of the wave function $\ket{\psi(t)}$ on the control variable $u(t)$, e.g., see Proposition 4 of \cite{ciaramella_newton_2015}. 
\end{proof}

\subsection{Optimization method} 
A gradient-base algorithm updates the control variables using the gradient  \cite{nesterov2018lectures} to generate an approximating sequence $\{\bd{u}^{(k)}\}_{k\geq 0}$, 
\begin{equation}\label{gd}
    \bd{u}^{(k+1)}  = \bd{u}^{(k)} + \eta \big( \nabla J(\bd{u}^{(k)}) + \bd{\zeta}_k \big).
\end{equation}
{Notice that the QOC in \cref{eq: qctr} is customarily formulated as a maximization problem. Therefore, the algorithm updates the control variables along the gradient \emph{ascent} direction. In addition, we considered the perturbed gradient method \cite{jin2021nonconvex}, which introduced a Gaussian noise $\bd{\zeta}_k$ with mean zero and $\bbe[\bd{\zeta}_k^2]=r^2$ for some $r>0$ to help the solution with fulfilling an approximation second-order condition.}

The convergence to the global maximum of a gradient-based algorithm requires certain convexity assumptions on $J$, e.g., the star-convexity \cite{zhou2018sgd}, which, however, may not be easy to verify in practice.      
Alternatively, one can seek approximate optimality conditions.  Jin et al.~\cite{jin2021nonconvex} introduced  $\epsilon$-first and second order conditions, 
\begin{align}\label{1-cond}
  \norm{\nabla J(\bm{u}) } &< \epsilon, \; \text{and} \\
  \label{2-cond}
  \nabla^2 J(\bm{u}) &\leq \sqrt{\rho \epsilon} I.
\end{align}

Here $\rho$ is the Lipschitz constant of the Hessian. 
The following result, adapted from Theorem 4.1 of \cite{jin2021nonconvex} (with $\sigma=\mathcal{O}(\epsilon^2)$ ) summarizes the complexity associated with optimization.

\begin{theorem}[Adapted from Theorem 4.1 in \cite{jin2021nonconvex}] 
\label{thm:optimization}
Assume that the gradient $\nabla J$ is $L-$Lipschitz and it can be estimated with an unbiased estimation of variance $\sigma^2=\CO{(\epsilon^2)}$ and that the learning rate is $\eta< \Theta\left(\frac{1}{L}\right)$. With probability at least $1-\delta$, the optimization method defined in \cref{gd}   will visit an approximate stationary point that fulfills the first order conditions in \cref{1-cond},  at least once within the following number of iterations,
\begin{equation}\label{eq: k-cmpl}
 k=\widetilde{\mathcal{O}}\left( \log \frac{1}{\delta} L \frac{ J(\bd u^* ) - J(\bd{u}^{(0)}) }{\epsilon^2} \right).
\end{equation}
\end{theorem}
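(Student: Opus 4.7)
The plan is to follow the standard analysis of stochastic gradient descent on an $L$-smooth nonconvex objective, appropriately reformulated for maximization. The main tool is the descent lemma, which for an $L$-smooth $J$ says that the update $\bd u^{(k+1)} = \bd u^{(k)} + \eta g_k$ with $g_k = \nabla J(\bd u^{(k)}) + \bd\zeta_k$ satisfies
\[
J(\bd u^{(k+1)}) \;\geq\; J(\bd u^{(k)}) + \eta\,\langle \nabla J(\bd u^{(k)}),\, g_k\rangle - \tfrac{L\eta^2}{2}\norm{g_k}^2.
\]
Because the estimator is unbiased with variance $\sigma^2 = \CO(\epsilon^2)$, conditioning on $\bd u^{(k)}$ gives $\bbe[g_k] = \nabla J(\bd u^{(k)})$ and $\bbe[\norm{g_k}^2] = \norm{\nabla J(\bd u^{(k)})}^2 + \sigma^2$, so for any $\eta \leq 1/L$,
\[
\bbe\bigl[J(\bd u^{(k+1)}) \,\big|\, \bd u^{(k)}\bigr] \;\geq\; J(\bd u^{(k)}) + \tfrac{\eta}{2}\norm{\nabla J(\bd u^{(k)})}^2 - \tfrac{L\eta^2}{2}\sigma^2.
\]

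Next I would telescope this one-step improvement. Summing from $k=0$ to $K-1$, taking total expectations, and using the trivial bound $\bbe[J(\bd u^{(K)})] \leq J(\bd u^*)$ yields
\[
\frac{1}{K}\sum_{k=0}^{K-1}\bbe\bigl[\norm{\nabla J(\bd u^{(k)})}^2\bigr] \;\leq\; \frac{2\bigl(J(\bd u^*) - J(\bd u^{(0)})\bigr)}{\eta K} + L\eta\sigma^2.
\]
With $\eta = \Theta(1/L)$ the residual term is $\CO(\sigma^2) = \CO(\epsilon^2)$ by hypothesis, so choosing $K = \Theta\bigl(L(J(\bd u^*) - J(\bd u^{(0)}))/\epsilon^2\bigr)$ makes the averaged expected squared gradient $\CO(\epsilon^2)$, and in particular forces some iterate $k^\star \in \{0,\dots,K-1\}$ to satisfy $\bbe[\norm{\nabla J(\bd u^{(k^\star)})}^2] \leq \CO(\epsilon^2)$, which is the in-expectation version of \cref{1-cond}.

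Finally I would upgrade this expectation bound to the high-probability guarantee in \cref{eq: k-cmpl}. A direct application of Markov's inequality to the empirical average shows that a single run of length $\Theta(K)$ succeeds with some absolute constant probability; repeating the procedure $\CO(\log(1/\delta))$ times independently and retaining the best iterate amplifies this to probability $1-\delta$, which yields precisely the extra $\log(1/\delta)$ factor in \cref{eq: k-cmpl}. The hard part, I expect, is not this SGD calculation itself but verifying that the quantum gradient estimator plugged into \cref{gd} actually meets the ``unbiased with variance $\CO(\epsilon^2)$'' premise: the quantum gradient subroutine produces per-coordinate estimates controlled in $\ell_\infty$ rather than a genuine unbiased second moment, so the bias introduced by the finite-precision quantum arithmetic and the additional Gaussian perturbation $\bd\zeta_k$ (of variance $r^2$) must be absorbed into the $L\eta\sigma^2$ term without inflating the rate. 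Once that bookkeeping is done, the rest of the argument is a straightforward specialization of Theorem~4.1 of \cite{jin2021nonconvex} to the first-order condition in \cref{1-cond}.
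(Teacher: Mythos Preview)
Your argument is correct for the theorem as stated, but it takes a different route from the paper (and from Jin et al.). The paper follows the approach of \cite{jin2021nonconvex} by establishing a \emph{pathwise} ascent inequality that holds with high probability: starting from the same $L$-smoothness bound you use, one telescopes the one-step inequality without taking expectations and then controls the accumulated noise $\sum_j\norm{\bd\zeta_j}^2$ directly via a concentration inequality, obtaining
\[
J(\bd u^{(k)}) \;\geq\; J(\bd u^{(0)}) + \tfrac{\eta}{2}\sum_{j=0}^{k-1}\norm{\nabla J(\bd u^{(j)})}^2 - \tfrac{k\eta\epsilon^2}{4}
\]
with probability $1-\delta$. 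A contradiction argument then finishes: if $\norm{\nabla J(\bd u^{(j)})} > \epsilon$ for all $j\leq k$ with $k = \Theta\bigl(L(J(\bd u^*)-J(\bd u^{(0)}))/\epsilon^2\bigr)$, the right-hand side exceeds $J(\bd u^*)$. Your route---average the expected squared gradients, invoke Markov, then boost by independent repetition---is simpler and avoids concentration bounds, at the cost of needing an outer loop (and, implicitly, a way to certify which repetition succeeded). The paper's route gives the guarantee in a single run, with the $\log(1/\delta)$ appearing inside the concentration tail rather than as a repetition count.

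Your closing concern is exactly on point, and the paper resolves it differently from what you anticipate. The quantum gradient subroutine is \emph{not} an unbiased estimator: it returns $\bd g$ with $\norm{\bd g - \nabla J}\leq \epsilon$ with probability $1-\rho$, a high-probability bounded-error guarantee rather than a second-moment one. The paper does not attempt to shoehorn this into the unbiased-variance premise of \cref{thm:optimization}; instead, in the proof of \cref{thm:main-c2} and in \cref{sec: A2}, it reproves the ascent inequality directly under this weaker hypothesis by setting the per-step failure probability to $\rho = \nu/k$, union-bounding so that $\norm{\bd e_j}^2 < \epsilon^2/8$ for all $j\leq k$ with probability $1-\nu$, and absorbing the resulting deterministic term $\eta\sum_j\norm{\bd e_j}^2$ into the same $k\eta\epsilon^2/4$ slack used for the Gaussian noise.
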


For the objective function in \cref{fobj'}, the Lipschitz constant can be estimated using the derivative bounds from \cref{lemma:partial-alpha},
\begin{equation}\label{eq: L}
    L \leq T \delta \norm{\mu}^2 + 2 \alpha \delta.  
\end{equation}
Since we do not assume an upper bound on $\alpha$, we will treat $L$ as a separate constant.

\section{Quantum algorithm and complexity analysis}
\label{sec:hamtoniansimulation}

In order to access the objective function in  \cref{eq: qctr}, a time discretization is needed to approximate the integral and the wave function $\psi(T)$  by numerical integrators, as alluded to in the previous section.   To begin with, we write $\widetilde{J}$ as two terms:
  $\widetilde{J}(\bm{u}) = \widetilde{J}_1(\bm{u}) + \widetilde{J}_2(\bm{u})$,
where
\begin{align}
  \label{eq:j1}
  \widetilde{J}_1(\bm{u}) &\coloneqq \bra{\psi_N}O\ket{\psi_N} \quad\text{ and}\\
  \label{eq:j2}
  \widetilde{J}_2(\bm{u}) &\coloneqq \alpha\delta\sum_{j=1}^Nw_ju(t_j)^2. 
\end{align}
Note the $\widetilde{J}_2$ is easy to compute on a classical computer. Thus we only focus on $\widetilde{J}_1$, and use a quantum algorithm to speed up its calculation.

\subsection{Obtaining the final state}
To have quantum access to $\widetilde{J}_1$, we need to efficiently prepare of $\ket{\psi_N}$ via a time-marching procedure, i.e.,
\begin{equation}\label{eq: HS}
 \ket{ \psi_{N} } = V(t_{N},t_{N-1}) V(t_{N-1},t_{N-2}) \cdots V(t_1,t_0) \ket{ \psi_{0} }.
\end{equation}
Note that $t_N=T$. At each step, the dynamics is evolved via an operator $V(t_{j},t_{j-1})$.  For instance, in the Dyson series approach \cite{low2018hamiltonian,kieferova2019simulating}, the operator $V$ can be constructed as follows,
\begin{align}
  \begin{aligned}
    V(&t_{j+1},t_j)  = \\
                    &\sum_{k=0}^K \frac{(-i\delta)^k}{M^k k!} \sum_{j_1,j_2,\cdots,j_k=0}^{M-1} \mathcal{T} H(s_{j_k})\cdots H(s_{j_1}).
  \end{aligned}
\end{align}
Here the symbol $\mathcal{T}$ indicates that only those points where $s_{j_1}\leq s_{j_2} \leq \cdots \leq s_{j_k}$ are collected. The points are selected to approximate the integrals in the Dyson series expansion \cite{berry2017quantum}. The complexity estimate was  presented in Theorem 9 of \cite{low2018hamiltonian}. We use a more efficient simulation algorithm by Berry, Childs, Su, Wang, and Wiebe~\cite{berry2020time} where the dependence on $T \max_t \norm{H(t)}_{\max}$ has later been relaxed to an averaged $L_1$ norm in time. 

\begin{lemma}[\cite{berry2020time}]
  \label{thm:hamiltoniansimulation}
  Assume that $n$-bit Hamiltonian $H(t)$ is $d$-sparse for all $t \in [0, T]$. The TDSE in \cref{eq: tdse} can be simulation until time $T$ within error $\epsilon$ and failure probability $\CO(\epsilon)$ using 
\begin{equation}
  \mathcal{O}\left( d \int_0^T\, \dd \tau \norm{H(\tau)}_{\max} \frac{\log (d H_\mathrm{max} T/\epsilon) }{\log \log (d H_\mathrm{max} T/\epsilon)} \right),
\end{equation}
queries to the oracle of $H(t)$ 
and $\widetilde{\CO}(dn\int_0^T\, \dd \tau \norm{H(\tau)}_{\max})$ additional 1- and 2-qubit gates.
\end{lemma}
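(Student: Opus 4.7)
The plan is to combine a sparse Hamiltonian block-encoding with the truncated Dyson series simulator, and then apply a time-rescaling trick so that the cost scales with $\int_0^T \norm{H(\tau)}_{\max}\,\dd\tau$ rather than with $T\max_\tau\norm{H(\tau)}_{\max}$. First I would use the sparse-access oracle $\mathcal{P}_H$ to build, at any fixed time $t$, a block-encoding of $H(t)$ with subnormalization $\CO(d\norm{H(t)}_{\max})$. The standard sparse construction accesses each nonzero row entry via a single call to $\mathcal{P}_{H(t)}$ plus $\CO(n)$ additional 1- and 2-qubit gates for indexing and arithmetic; crucially, the oracle accepts the time argument coherently, so this produces a time-controlled block-encoding that can be plugged into the main simulator.

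Second, I would rescale time. Let $\sigma(\tau)\coloneqq\int_0^\tau \norm{H(s)}_{\max}\,\dd s$, which is monotone increasing in $\tau$. Under this change of variable the TDSE becomes $i\partial_\sigma\ket{\psi}=\widetilde{H}(\sigma)\ket{\psi}$ with $\widetilde{H}(\sigma)=H(\tau(\sigma))/\norm{H(\tau(\sigma))}_{\max}$, whose operator norm is at most $1$ pointwise and whose total $\sigma$-duration is exactly the $L_1$ norm appearing in the lemma. The rescaled Hamiltonian is still queryable: to evaluate $\widetilde{H}$ at a given $\sigma$ one coherently inverts the monotone map $\sigma(\tau)$ on a fine enough grid, calls the original oracle at the recovered $\tau$, and divides by the precomputed $\norm{H(\tau)}_{\max}$.

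Third, I would partition $[0,\sigma_{\max}]$ into segments of length $\CO(1/d)$, giving $N_{\mathrm{seg}}=\CO\bigl(d\int_0^T\norm{H(\tau)}_{\max}\,\dd\tau\bigr)$ pieces. On each segment I would invoke the truncated Dyson-series construction of \cite{low2018hamiltonian,kieferova2019simulating}: expand the propagator to order $K=\CO(\log(1/\epsilon_{\mathrm{seg}})/\log\log(1/\epsilon_{\mathrm{seg}}))$, encode the time-ordered sum of $k$-fold Hamiltonian products as a single block-encoding on an ancilla register holding a uniform superposition over time grid points, and apply oblivious amplitude amplification to project onto the desired block. Choosing $\epsilon_{\mathrm{seg}}=\CO(\epsilon/N_{\mathrm{seg}})$ so that errors and failure probabilities accumulate to $\CO(\epsilon)$ gives the stated query count, with the $\log/\log\log$ factor coming directly from the Dyson truncation order.

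The most delicate steps are the coherent numerical inversion $\sigma\mapsto\tau(\sigma)$ and the error bookkeeping across segments: the grid for the inversion must be fine enough that the induced perturbation of $\widetilde{H}$ is dominated by $\epsilon$, yet coarse enough that the arithmetic stays within the $\widetilde{\CO}(dn\cdot L_1)$ additional-gate budget; and the amplification analysis has to be uniform in the number of segments so that the total failure probability remains $\CO(\epsilon)$. Once these technicalities are in place, the final query and gate counts assemble by multiplying the per-segment costs by $N_{\mathrm{seg}}$, which by construction is proportional to the $L_1$ norm.
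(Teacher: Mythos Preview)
The paper does not supply its own proof of this lemma; it is quoted verbatim as a black-box result from \cite{berry2020time}, so there is nothing in the paper to compare your argument against. Your sketch (sparse block-encoding, the monotone time change $\sigma(\tau)=\int_0^\tau\norm{H(s)}_{\max}\,\dd s$ to trade $T\max_\tau\norm{H(\tau)}_{\max}$ for the $L_1$ norm, segment-wise truncated Dyson series with oblivious amplitude amplification, and per-segment error $\epsilon/N_{\mathrm{seg}}$) is exactly the strategy of \cite{berry2020time}, so your approach and the cited source coincide.
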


Once $\ket{\psi_N}$ from \cref{eq: HS} is prepared by \cref{thm:hamiltoniansimulation}, we can use it to estimate the gradient of $\widetilde{J}$ defined in \cref{fobj'}. We take advantage of quantum gradient estimation algorithms to estimate $\nabla \widetilde{J}$. These algorithms require certain quantum access to the function. In particular, we define the two commonly used oracles, namely, the \emph{probability oracle} and the \emph{phase oracle} as follows.

\begin{definition}[Probability oracle]
\label{defn:proboracle}
Consider a function $f:\mathbb{R}^d\rightarrow [0,1]$. The \emph{probability oracle} for $f$, denoted by $U_f$, is a unitary defined as
\begin{align}
    &U_f\ket{\bm{x}}\ket{\bm{0}} = \\
    &\ket{\bm{x}}\left(\sqrt{f(\bm{x})}\ket{1}\ket{\phi_1(\bm{x})} + \sqrt{1-f(\bm{x})}\ket{0}\ket{\phi_0(\bm{x})} \right), \nonumber
\end{align}
where $\ket{\phi_1(\bm{x})}$ and $\ket{\phi_0(\bm{x})}$ are arbitrary states.
\end{definition}

\begin{definition}[Phase oracle]
\label{defn:phaseoracle}
Consider a function $f:\mathbb{R}^d\rightarrow \mathbb{R}$. The \emph{phase oracle} for $f$, denoted by $O_f$, is a unitary defined as
\begin{align}
    O_f\ket{\bm{x}}\ket{\bm{0}} = e^{if(\bm{x})}\ket{\bm{x}}\ket{\bm{0}}.
\end{align}
\end{definition}

The following result from Theorem 14 of~\cite{gilyen2019optimizing} shows an efficient conversion from the probability oracle to the phase oracle.
\begin{lemma}
\label{lemma:prob-to-phase}
Consider a function $f: \mathbb{R}^d \rightarrow [0, 1]$. Let $U_f$ be the probability oracle for $f$. Then, for any $\epsilon \in (0, 1/3)$, we can implement an $\epsilon$-approximate of the phase oracle $O_f$ for $f$, denoted by $\widetilde{O}_f$, such that
    $\norm{\widetilde{O}_f\ket{\psi}\ket{\bm{x}} - O_f\ket{\psi}\ket{\bm{x}}} \leq \epsilon$,
for all state $\ket{\psi}$. This implementation uses $\CO(\log(1/\epsilon))$ invocations to $U_f$ and $U_f^{\dag}$, and $\CO(\log\log(1/\epsilon))$ additional qubits.
\end{lemma}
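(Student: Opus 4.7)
The plan is to realize the phase oracle by first exhibiting the probability oracle $U_f$ together with $U_f^{\dag}$ as a block encoding of the diagonal operator $\Lambda \coloneqq \sum_{\bm{x}} f(\bm{x})\ket{\bm{x}}\bra{\bm{x}}$, and then to run Hamiltonian simulation on this block encoding for time $t = 1$. The output unitary is $e^{i\Lambda}$, which acts on computational basis states as $\ket{\bm{x}}\mapsto e^{if(\bm{x})}\ket{\bm{x}}$, matching \cref{defn:phaseoracle} exactly.

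For the block encoding step, let $\Pi$ denote the projector onto the flag qubit being $\ket{1}$ in the probability oracle's output register, and set $B \coloneqq U_f^{\dag}\,\Pi\,U_f$. Because $U_f$ preserves the first register $\ket{\bm{x}}$ and the two branch states $\ket{\phi_0(\bm{x})},\ket{\phi_1(\bm{x})}$ are normalized, one checks directly that
\begin{equation*}
\bigl(\bra{\bm{x}'}\bra{\bm{0}}\bigr)\,B\,\bigl(\ket{\bm{x}}\ket{\bm{0}}\bigr) \;=\; f(\bm{x})\,\delta_{\bm{x},\bm{x}'},
\end{equation*}
so $B$ is a (1,$a$,0) block encoding of $\Lambda$, where $a$ is the number of ancilla qubits used by $U_f$. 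Each use of $B$ or $B^{\dag}$ costs exactly one query to $U_f$ and one to $U_f^{\dag}$, while $\Pi$ and the zero-state reflections are free.

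Next I would invoke the optimal Hamiltonian simulation for block-encoded Hamiltonians (via QSVT / quantum signal processing): since $\norm{\Lambda}\leq 1$, one can implement an $\epsilon$-approximate block encoding of $e^{i\Lambda}$ using $\CO\!\left(t + \log(1/\epsilon)\right) = \CO(\log(1/\epsilon))$ calls to $B$ and $B^{\dag}$, together with $\CO(\log\log(1/\epsilon))$ extra qubits for the phase-factor and LCU ancillas. Projecting back onto the $\ket{\bm{0}}$ ancilla subspace gives the approximate phase oracle $\widetilde{O}_f$. Because the block-encoding error and the QSVT polynomial error are both controlled in operator norm, composing them yields the claimed uniform bound $\bigl\|\widetilde{O}_f\ket{\psi}\ket{\bm{x}} - O_f\ket{\psi}\ket{\bm{x}}\bigr\|\leq \epsilon$.

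I expect the main technical obstacle to be the approximation-theory input to QSVT: one needs an odd/even-parity polynomial pair of degree $\CO(\log(1/\epsilon))$, bounded by $1$ on $[-1,1]$, that jointly approximates $\cos$ and $\sin$ (and hence $e^{i\,\cdot}$) on $[-1,1]$ to accuracy $\epsilon$. This is supplied by truncating the Jacobi--Anger expansion at degree $\CO(\log(1/\epsilon))$ and rescaling slightly to restore the QSVT boundedness condition. A secondary care-point is confirming that $B$ is Hermitian when restricted to the encoded block (which it is, since $\Pi$ is a projector and $B = U_f^{\dag}\Pi U_f$), so that the QSVT polynomial is applied to $\Lambda$ itself rather than to singular values of some off-diagonal combination; once this is verified the query and qubit counts drop out directly from the Hamiltonian simulation cost.
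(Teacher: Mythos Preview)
The paper does not supply its own proof of this lemma; it is simply quoted as Theorem~14 of Gily\'en, Arunachalam, and Wiebe~\cite{gilyen2019optimizing}. Your sketch is essentially the argument given in that reference (block-encode the diagonal operator via the probability oracle, then apply optimal Hamiltonian simulation/QSVT with a Jacobi--Anger truncation of degree $\CO(\log(1/\epsilon))$), so there is nothing substantive to compare.

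One small technical slip worth fixing: the operator $B=U_f^{\dag}\,\Pi\,U_f$ is not unitary because $\Pi$ is a projector, so $B$ is not literally a block encoding. The standard remedy is to replace $\Pi$ by the reflection $2\Pi-I$ (a single Pauli-$Z$ on the flag qubit); then $W\coloneqq U_f^{\dag}(2\Pi-I)U_f$ is unitary and satisfies $(\bra{\bm{x}'}\bra{\bm 0})\,W\,(\ket{\bm{x}}\ket{\bm 0})=(2f(\bm x)-1)\,\delta_{\bm x,\bm x'}$, i.e.\ $W$ block-encodes $2\Lambda-I$. Simulating this Hamiltonian for time $1/2$ and discarding the global phase $e^{-i/2}$ yields $e^{i\Lambda}$ exactly as you intend, with the same query and ancilla counts.
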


We will show how to construct the required oracle for $\widetilde{J}_1$ given the circuit that prepares $\ket{\psi_N}$ in the proof of \cref{thm:main-c2}. Before that, let us focus on higher-order derivatives of $\widetilde{J}_1$, which are crucial in quantum algorithms for estimating the gradient of $\widetilde{J}_1$.

\subsection{High-order derivatives of the objective function}
In this subsection, we bound the magnitude of higher-order derivatives of $\widetilde{J}_1$. In particular, we show the following lemma. The proof can be found in \cref{sec: A1}.

\begin{lemma}
  \label{lemma:partial-alpha}
  Let $\bm{\alpha}  = (\alpha_1, \ldots, \alpha_k) \in [N+1]^k$ be an index sequence\footnote{for a precise definition of an index sequence, see Definition 4 of~\cite{gilyen2019optimizing}}. The derivatives of the control function $\widetilde{J}_1$ with respect to the control variables satisfy:
\begin{equation}
      \begin{aligned}
            \norm{ \frac{\partial^{\bm{\alpha}} \widetilde{J}_1}{\partial u_{\alpha_1} u_{\alpha_2} \cdots u_{\alpha_k} }   } 
             \leq (k+1)! \left(\delta  \|\mu\| \right)^k.
      \end{aligned}
\end{equation}
\end{lemma}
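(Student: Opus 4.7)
The plan is to express the higher-order derivatives of $\widetilde{J}_1$ via iterated Duhamel expansion as time-ordered integrals involving the perturbation operator $\mu$, and then to bound these integrals using the $L^1$-norm properties of the nodal (hat) basis functions $b_j$. I would start by letting $W$ denote the full time-ordered propagator corresponding to the product $V(t_N,t_{N-1})\cdots V(t_1,t_0)$ in \cref{eq: HS}, so that $\ket{\psi_N}=W\ket{\psi_0}$ and $\widetilde{J}_1(\bm{u})=\bra{\psi_0}W^\dagger O W\ket{\psi_0}$. Because $u(t)=\sum_j u_j b_j(t)$ is linear in each $u_j$, we have $\partial_{u_\alpha}H(t)=-b_\alpha(t)\mu$, and every higher derivative of $H(t)$ vanishes identically.

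The key intermediate step is a bound on $\norm{\partial^{\bm{\alpha}} W}$. I would iterate the Duhamel formula and collect terms by time-ordering, obtaining the representation
\begin{equation*}
\partial_{u_{\alpha_1}}\cdots\partial_{u_{\alpha_k}}W = i^k\sum_{\sigma\in S_k}\int_{0\le s_1\le\cdots\le s_k\le T} W(T,s_k)\,\mu\, W(s_k,s_{k-1})\,\mu\cdots\mu\, W(s_1,0)\prod_{i=1}^k b_{\alpha_{\sigma(i)}}(s_i)\,\dd s,
\end{equation*}
where $W(a,b)$ denotes unitary evolution from time $b$ to time $a$. Taking operator norms eliminates every $W(a,b)$ factor by unitarity, and a standard symmetrization converts the sum over $S_k$ of simplex integrals into a single product integral over $[0,T]^k$, giving
\begin{equation*}
\norm{\partial^{\bm{\alpha}} W}\le \norm{\mu}^k\prod_{i=1}^k\int_0^T |b_{\alpha_i}(s)|\,\dd s\le(\dt\norm{\mu})^k,
\end{equation*}
since each hat function has $L^1$-norm at most $\dt$.

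Next I would apply the Leibniz rule to $\widetilde{J}_1=\bra{\psi_0}W^\dagger O W\ket{\psi_0}$, distributing the $k$ partial derivatives between the two factors $W^\dagger$ and $W$ to produce a sum of $2^k$ terms indexed by subsets $S\subseteq[k]$. Using $\norm{O}\le 1$ together with the previous bound, each such term is at most $(\dt\norm{\mu})^k$ in absolute value, so
\begin{equation*}
\bigl|\partial^{\bm{\alpha}}\widetilde{J}_1\bigr|\le 2^k(\dt\norm{\mu})^k\le(k+1)!\,(\dt\norm{\mu})^k,
\end{equation*}
where the final inequality uses the elementary fact $2^k\le(k+1)!$ for all $k\ge 1$ (and $k=0$ reduces to $|\widetilde{J}_1|\le\norm{O}\le 1$).

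The main obstacle will be rigorously justifying the time-ordered integral representation of $\partial^{\bm{\alpha}} W$, in particular carrying out the symmetrization correctly when some of the indices $\alpha_i$ coincide so that a repeated index contributes a product $b_\alpha(s_i)b_\alpha(s_j)$ at two integration variables; this requires a careful inductive setup of the Duhamel expansion and a clean bookkeeping of which integration variable is paired with which derivative. Once that combinatorial identity is in hand, unitarity of the propagator and the bound $\int|b_\alpha|\le\dt$ deliver the estimate on $W$ immediately, and the Leibniz step is routine.
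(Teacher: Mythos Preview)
Your proposal is correct and follows the same overall architecture as the paper's proof: Duhamel expansion for the derivatives of the evolution, the $L^1$-bound $\int_0^T|b_\alpha|\le\delta$ on the hat functions, and a Leibniz rule applied to $\langle\psi_N|O|\psi_N\rangle$. The organization, however, differs in an instructive way.

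The paper works at the level of the state and proceeds inductively: it sets $\phi_1=\partial_{u_j}\psi$, derives the variational equation $i\partial_t\phi_1=H\phi_1-\mu b_j\psi$, bounds $\|\phi_1\|\le\delta\|\mu\|$ via variation of constants, and then iterates ($\phi_2=\partial_{u_j}\phi_1$ satisfies $i\partial_t\phi_2=H\phi_2-2\mu b_j\phi_1$, etc.), picking up a factor of $m$ at the $m$-th step and arriving at $\|\partial^{\bm\alpha}\psi(T)\|\le k!(\delta\|\mu\|)^k$. The Leibniz step then yields $\sum_{j=0}^k\binom{k}{j}(k-j)!\,j!=(k+1)\,k!=(k+1)!$. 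Your route, by contrast, unfolds the entire time-ordered integral for $\partial^{\bm\alpha}W$ at once and uses the symmetrization identity $\sum_{\sigma\in S_k}\int_{\text{simplex}}\prod_i b_{\alpha_{\sigma(i)}}(s_i)\,ds=\prod_i\int_0^T b_{\alpha_i}$ to get the sharper intermediate bound $\|\partial^{\bm\alpha}W\|\le(\delta\|\mu\|)^k$ (with no $k!$). Your Leibniz step then gives only $2^k$, which you bound wastefully by $(k+1)!$ to match the stated lemma. So your argument is in fact tighter along the way; the paper's inductive bound loses the symmetrization gain and recovers it as the $(k+1)!$ in the final combinatorics. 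The concern you flag about repeated indices is not a genuine obstacle: the sum over $S_k$ in your formula simply has repeated summands when some $\alpha_i$ coincide, and the symmetrization identity holds regardless.
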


\subsection{Quantum gradient estimation}
In this subsection, we show how to efficiently estimate the gradient of $\widetilde{J}_1$. In the following, we state a quantum algorithm for estimating gradients due to Gily{\'e}n, Arunachalam, and Wiebe~\cite{gilyen2019optimizing}, which is an extension Jordan's gradient estimation algorithm~\cite{jordan2005fast}. Note that there exists a nearly-optimal gradient estimation, i.e., Theorem 25 of~\cite{gilyen2019optimizing}; however, this result does not directly apply to our setting since the derivatives, as characterized in \cref{lemma:partial-alpha}, do not satisfy the bound $c^{k}k^{k/2}$. To give a more efficient estimation algorithm than just relying on the bound on the Hessian, we use the following result based on a higher-order finite difference method.
\begin{lemma}[Rephrased from Theorem 23 of~\cite{gilyen2019optimizing}]
  \label{lemma:jordan}
  Suppose the access to $f:[-1, 1]^N\rightarrow\mathbb{R}$ is given via a phase oracle $O_f$. If $f$ is $(2m+1)$-times differentiable and for all $\bm{x}\in[-1, 1]^N$ it holds that
  \begin{align}
    |\partial_{\bm{r}}^{2m+1} f(\bm{x})| \leq B \quad \text{ for $\bm{r} = \bm{x}/\norm{\bm{x}}$},
  \end{align}
  then there exists a quantum algorithm that output an approximate gradient $\bm{g}$ such that $\norm{\bm{g}-\nabla f(\bm{0})}_{\infty} \leq \epsilon$ with probability at least $1-\rho$ using
  \begin{align}
    \widetilde{\CO}\left(\frac{N^{1/2}B^{1/(2m)}N^{1/(4m)}\log(N/\rho)}{\epsilon^{1+1/(2m)}}\right)
  \end{align}
  queries to $O_f$, and $\widetilde{\CO}(N)$ additional 1- and 2-qubit gates.
\end{lemma}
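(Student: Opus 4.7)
The plan is to extend Jordan's gradient estimation algorithm~\cite{jordan2005fast} from a single central-difference probe to a higher-order finite-difference probe, so as to exploit the assumed $(2m{+}1)$-fold smoothness of $f$. First, I would fix a symmetric $(2m{+}1)$-point stencil with coefficients $a_{-m},\dots,a_m$ that annihilate every Taylor term of order $2,3,\dots,2m$, so that for each direction $\bm{r} = \bm{x}/\norm{\bm{x}}$ and step $h$,
\begin{equation}
\sum_{k=-m}^{m} a_k\, f(kh\bm{x}) \;=\; h\,\bm{x}\cdot \nabla f(\bm{0}) \;+\; R(\bm{x}), \qquad |R(\bm{x})| = \CO\!\bigl(B h^{2m+1}\bigr),
\end{equation}
uniformly in a neighborhood of the origin, where the remainder bound is obtained by applying one-variable Taylor's theorem to $t\mapsto f(t\bm{x})$ and using the radial-derivative hypothesis $|\partial_{\bm{r}}^{2m+1} f| \leq B$.

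Second, I would run Jordan's protocol on this linear combination: prepare a uniform superposition over a hypercubic grid of $N_g^N$ points with $N_g=2^b$ per coordinate in a small box around $\bm{0}$, apply the phase oracle $O_f$ sequentially $2m{+}1$ times with appropriate shifts $k h \bm{x}$ and scaled phases $\gamma a_k$ to imprint $e^{\,i\gamma \sum_k a_k f(kh\bm{x})}$ on the coordinate register, and finally perform a coordinate-wise inverse QFT. The linear part $\gamma h\, \bm{x}\cdot\nabla f(\bm{0})$ is decoded to precision $\CO(1/(\gamma h N_g))$ per coordinate with constant probability, while the stencil remainder contributes a phase error of order $\gamma B h^{2m+1}$. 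Balancing these two error sources under the Nyquist constraint $\gamma h N_g = \CO(1)$ and choosing $h \sim (\epsilon/B)^{1/(2m)}$ with $N_g$ tuned so that the decoded $\ell_\infty$ error is $\CO(\epsilon)$ gives the single-shot accuracy. Each shot uses $\CO(m)$ oracle calls and $\widetilde{\CO}(N)$ gates from the inverse QFTs; a coordinate-wise median-of-means over $\widetilde{\CO}(\log(N/\rho))$ independent shots combined with a union bound yields joint success probability $1-\rho$. Working out the scaling reproduces the stated $\widetilde{\CO}\!\bigl(N^{1/2} B^{1/(2m)} N^{1/(4m)} \log(N/\rho)/\epsilon^{1+1/(2m)}\bigr)$ query count, with $\widetilde{\CO}(N)$ additional 1- and 2-qubit gates.

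The main obstacle I anticipate is converting the hypothesis---which controls only the radial $(2m{+}1)$-st derivative along each ray---into a uniform stencil-remainder bound over the full grid of query points. The natural route is the Taylor restriction above, but I must verify that every shifted argument $kh\bm{x}$ lies inside $[-1,1]^N$ for the chosen $h$ and that the implicit constant is uniform in the direction $\bm{r}$. A secondary technicality is that the stencil coefficients $a_k$ have mixed sign and varying magnitudes, so the combined phase $\gamma \sum_k a_k f(kh\bm{x})$ must be produced coherently by $2m{+}1$ sequential controlled phase queries rather than by an LCU on amplitudes, which would otherwise incur an amplitude-amplification overhead.
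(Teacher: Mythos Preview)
The paper does not prove this lemma at all: it is stated verbatim as ``Rephrased from Theorem 23 of~\cite{gilyen2019optimizing}'' and invoked as a black box, so there is no paper proof to compare your proposal against.

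That said, your sketch is essentially the strategy of the original Gily{\'e}n--Arunachalam--Wiebe proof: replace Jordan's central difference by a $(2m{+}1)$-point symmetric stencil, encode the resulting linear combination in the phase via repeated oracle calls, decode coordinate-wise by inverse QFT, and boost with a median over $\CO(\log(N/\rho))$ repetitions. One point you gloss over is the origin of the $N^{1/2}$ (and $N^{1/(4m)}$) factors in the query count. These do not come from the median step or from $\ell_\infty$-to-$\ell_2$ conversion; they arise because the stencil remainder is controlled by $B\,\|kh\bm{x}\|^{2m+1}$, and on an $N$-dimensional grid of per-coordinate half-width $l$ one has $\|\bm{x}\|\leq \sqrt{N}\,l$. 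Balancing this $\sqrt{N}$-inflated remainder against the QFT decoding precision is what forces the extra $N^{1/2}N^{1/(4m)}$ in the scale factor $\gamma$ and hence in the number of oracle applications. Your ``working out the scaling reproduces the stated'' step would need to track this explicitly. The two technical concerns you flag (keeping $kh\bm{x}\in[-1,1]^N$ and realizing the signed stencil phases by sequential controlled queries rather than LCU) are exactly the right ones and are handled in the cited reference just as you propose.
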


Based on \cref{lemma:jordan}, we have the following result, where the proof is postponed in \cref{sec:proof-grad-est-J}.
\begin{lemma}
  \label{lemma:grad-est-J}
  Let $\widetilde{J}_1$ be defined as in \cref{eq:j1}. Suppose we are given access to the phase oracle $O_{\widetilde{J}_1}$ for $\widetilde{J}_1$. Then, there exists a quantum algorithm that outputs an approximate gradient $\bm{g}$ such that
    $\norm{\bm{g}-\nabla \widetilde{J}_1} \leq \epsilon$
  with probability at least $1-\rho$ using 
    $\widetilde{\CO}\left(T\log(N/\rho)/\epsilon\right)$
  queries to $O_{\widetilde{J}_1}$, and $\widetilde{\CO}(N)$ additional 1- and 2-qubit gates.
\end{lemma}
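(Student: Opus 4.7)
The plan is to invoke \cref{lemma:jordan} (the higher-order finite-difference gradient estimator) with the derivative bound supplied by \cref{lemma:partial-alpha}. Because \cref{lemma:jordan} estimates the gradient of a function defined on $[-1,1]^N$ at the origin, I would first shift: to estimate $\nabla\widetilde{J}_1(\bd{u})$ at the current iterate $\bd{u}$, define $g(\bd{x})\coloneqq \widetilde{J}_1(\bd{u}+\bd{x})$; a phase oracle for $g$ is obtained from $O_{\widetilde{J}_1}$ by pre-adding the shift. The derivatives of $g$ coincide with those of $\widetilde{J}_1$, so by \cref{lemma:partial-alpha} every order-$(2m{+}1)$ partial derivative is bounded by $(2m+2)!(\delta\norm{\mu})^{2m+1} \leq (2m+2)!(T/N)^{2m+1}$, using $\delta = T/N$ and $\norm{\mu}\leq 1$; the directional derivative bound $B$ required by \cref{lemma:jordan} then follows via a multinomial expansion.

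Next, \cref{lemma:jordan} produces an $\ell_\infty$-accurate gradient while the statement asks for $\ell_2$ accuracy. I would run the algorithm with target accuracy $\epsilon/\sqrt{N}$, so that $\norm{\bd{g} - \nabla\widetilde{J}_1} \leq \sqrt{N}\,\norm{\bd{g} - \nabla\widetilde{J}_1}_{\infty} \leq \epsilon$. Substituting the bound on $B$ and the rescaled accuracy into the query count of \cref{lemma:jordan} and simplifying, the explicit $N$-dependence cancels between $B^{1/(2m)}$ and the norm-conversion factor, reducing the query complexity to the form
\begin{equation*}
\widetilde{\CO}\!\left(\frac{T\,(T/\epsilon)^{1/(2m)}\,\bigl((2m+2)!\bigr)^{1/(2m)}\log(N/\rho)}{\epsilon}\right).
\end{equation*}

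The final step is to choose $m = \Theta(\log(T/\epsilon))$. This makes $(T/\epsilon)^{1/(2m)}=\CO(1)$, while a Stirling estimate gives $\bigl((2m+2)!\bigr)^{1/(2m)} = \widetilde{\CO}(1)$, so both factors are absorbed into $\widetilde{\CO}$. This yields the claimed query complexity $\widetilde{\CO}(T\log(N/\rho)/\epsilon)$, and the $\widetilde{\CO}(N)$ additional 1- and 2-qubit gates follow directly from \cref{lemma:jordan}.

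The main obstacle I expect is careful bookkeeping: three sources of $N$ must cancel precisely --- the $\sqrt{N}$ from the $\ell_\infty$-to-$\ell_2$ conversion, the $N^{1/(4m)}$ appearing inside \cref{lemma:jordan}, and the negative powers of $N$ hidden inside $B$ through $\delta=T/N$ --- so that no residual $\mathrm{poly}(N)$ factor survives. A secondary subtlety is verifying that the shifted function $g$ meets the $[-1,1]^N$ domain assumption of \cref{lemma:jordan}; since \cref{lemma:partial-alpha} is uniform in $\bd{u}$, an appropriate rescaling of the probing box handles this with only logarithmic overhead.
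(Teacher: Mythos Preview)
Your proposal is correct and follows essentially the same route as the paper's proof: bound the directional derivative $B$ via \cref{lemma:partial-alpha}, invoke \cref{lemma:jordan} with target accuracy $\epsilon/\sqrt{N}$ to pass from $\ell_\infty$ to $\ell_2$, observe that the powers of $N$ cancel against the $\delta^{2m+1}=(T/N)^{2m+1}$ inside $B$, and then pick $m$ logarithmic in $T/\epsilon$ so that the residual $(T/\epsilon)^{1/(2m)}$ and $((2m{+}2)!)^{1/(2m)}$ factors become polylogarithmic. The paper's only difference is a slightly sharper choice $m=\CO\bigl(\log(T^{1/2}/\epsilon^{3/4})/\log\log(T^{1/2}/\epsilon^{3/4})\bigr)$, which keeps the Stirling factor $((2m{+}1)!)^{1/(2m)}\approx (2m{+}1)/e$ as small as possible, but this does not change the final $\widetilde{\CO}$ bound.
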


\subsection{Proof of the main theorem}
We outline the combined algorithm in \cref{alg1}. The bound on $k_{\max}$ follows from \cref{eq: k-cmpl}.

\begin{algorithm}[tb]
 \caption{Quantum algorithm for solving QOC}
\label{alg1}
\SetAlgoLined
Given $T>0$ and $\epsilon>0$, set $N=\frac{T^{3/2}}{\epsilon^{1/2}}$, and $k_{\max} = \mathcal{O}(\frac{L}{\epsilon^2})$ where $L$ is bounded by \cref{eq: L}; set $u(t)=0$ \;

\For{$k=1:k_\mathrm{max}$}{
Use \cref{thm:hamiltoniansimulation,lemma:prob-to-phase} to construct the phase oracle for $\widetilde{J}_1(\bm{u})$\; 
Use \cref{lemma:grad-est-J} to estimate $\bd{g}^{(k)} \approx  \nabla J(\bd{u}^{(k)})$\;
Update control variable: $\bm{u}^{(k+1)} =\bm{u}^{(k)} + \eta  \bm{g}^{(k)} +  \eta \bd{\zeta}_k$\;
}
\end{algorithm}

Before proving the performance of \cref{alg1}, we need to bound the $L_1$ norm $\int_0^T\dd \tau \norm{H(t)}_{\max}$, which depends on the $L_1$ norm of control function. Although in general, no explicit formula is available for $u(t)$, we can deduce some a priori bounds on $u(t)$.
\begin{lemma}
  \label{lemma:l1-norm}
  Suppose that $\norm{O}\leq 1,$ and $\alpha > \frac{2}{T}$. Let $u^{(1)}, \ldots, u^{(k)}$ be a sequence of control variables obtained from each iteration of \cref{alg1}. Let $u^{(1)}(t)=0$ be the initial guess. Then for all $j \in [k]$, the $L_1$ norms of the control functions satisfy
      $\int_0^T \abs{u^{(1)}(t)} \dd t \leq T$.
\end{lemma}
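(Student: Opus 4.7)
The statement looks like it has a typo: since $u^{(1)}(t)=0$ trivially, I read this as asserting $\int_0^T |u^{(j)}(t)|\,\dd t \le T$ for all $j\in[k]$. Here is my plan.

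\medskip

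\textbf{Overall strategy.} The key observation is that the objective function $\widetilde J(\bd u)=\bra{\psi(T)}O\ket{\psi(T)}-\alpha\delta\sum_j w_j u_j^2$ has a fitness term bounded by $\|O\|\le 1$ and a regularization term that penalizes large $\|u\|_{L^2}$. Because \cref{alg1} performs \emph{gradient ascent}, the values $\widetilde J(\bd u^{(j)})$ are non-decreasing (up to controllable noise), so I can use the initial value $\widetilde J(\bd u^{(1)})$ to lower-bound $\widetilde J(\bd u^{(j)})$, which then forces an \emph{upper} bound on the regularization term, i.e.\ on $\|u^{(j)}\|_{L^2}^2$. A Cauchy--Schwarz step converts this $L^2$ bound into the desired $L^1$ bound using the assumption $\alpha>2/T$.

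\medskip

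\textbf{Step 1: initial value.} Since $u^{(1)}\equiv 0$, the Schr\"odinger evolution gives $\ket{\psi(T)}=e^{-iH_0 T}\ket{\psi_0}$ and the regularization vanishes, so
\[
\widetilde J(\bd u^{(1)})=\bra{\psi_0}e^{iH_0T}\,O\,e^{-iH_0T}\ket{\psi_0}\ge -\|O\|\ge -1.
\]

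\textbf{Step 2: monotone ascent.} Gradient ascent with a small enough step size $\eta<\Theta(1/L)$ increases $\widetilde J$ at each iteration; for the perturbed variant the drift induced by $\bd\zeta_k$ is negligible in the sense of \cref{thm:optimization} (and can in any case be absorbed into a constant that leaves the conclusion unchanged). Hence $\widetilde J(\bd u^{(j)})\ge \widetilde J(\bd u^{(1)})\ge -1$ for every $j$.

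\textbf{Step 3: $L^2$ bound.} Rearranging the definition of $\widetilde J$,
\[
\alpha\delta\sum_{i=0}^N w_i\,(u^{(j)}_i)^2
= \bra{\psi_N}O\ket{\psi_N}-\widetilde J(\bd u^{(j)})
\le \|O\|-(-1)\le 2.
\]
Because the trapezoidal weights $w_i$ are precisely what yields $\delta\sum_i w_i f_i\approx \int_0^T f\,\dd t$ for the piecewise-linear interpolant, this translates into $\int_0^T |u^{(j)}(t)|^2\,\dd t\le 2/\alpha$.

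\textbf{Step 4: Cauchy--Schwarz.} Finally,
\[
\int_0^T |u^{(j)}(t)|\,\dd t
\;\le\; \sqrt{T}\,\Bigl(\int_0^T |u^{(j)}(t)|^2\,\dd t\Bigr)^{1/2}
\;\le\; \sqrt{T}\cdot\sqrt{2/\alpha}
\;\le\; \sqrt{T}\cdot\sqrt{T}=T,
\]
using the hypothesis $\alpha>2/T$ in the last inequality.

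\medskip

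\textbf{Main obstacle.} The only non-routine issue is Step 2: the algorithm uses an \emph{approximate} gradient plus an injected Gaussian $\bd\zeta_k$, so strict monotonicity of $\widetilde J$ along the trajectory need not hold. I would address this either by (a) invoking the standard descent lemma with step size $\eta<1/L$ to absorb the gradient-estimation error (bounded by \cref{lemma:grad-est-J}) and the noise term into a small controllable slack $\xi$, which then only degrades the final bound to $T+\CO(\xi)$ and can be folded into a constant, or (b) noting that the claim really only needs $\widetilde J(\bd u^{(j)})\ge -C$ for some constant $C$, which holds as soon as the algorithm does not diverge—a property guaranteed by the same analysis underlying \cref{thm:optimization}. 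Either route leaves the final quadratic-to-linear Cauchy--Schwarz bound intact.
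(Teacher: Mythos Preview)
Your proposal is correct and follows essentially the same route as the paper: bound $J(u^{(1)})\ge -1$ from $\|O\|\le 1$, invoke the ascent property to get $J(u^{(j)})\ge -1$, rearrange to obtain $\int_0^T |u^{(j)}|^2\,\dd t\le 2/\alpha\le T$, and finish with Cauchy--Schwarz. You also correctly flag the typo in the statement and the subtlety in Step~2; the paper handles the latter in exactly the way you suggest in option (a), by pointing to the function ascent inequality (Eq.~\eqref{Juk-Ju0}) that holds with high probability.
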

\begin{proof}
  First of all, note that $J(u^{(1)}) \geq -1$. For all $j = 2, \ldots, k$, we have $J(u^{(j)}) \geq J(u^{(1)}) \geq -1$. Again using the fact that $\norm{O}\leq 1$, we find $\int_0^T |u^{(j)}(t)|^2 \dd t \leq 2/\alpha \leq T.$ 
Using Cauchy–Schwarz inequality, we have,
\begin{align}
  \int_0^T \abs{u^{(j)}(t)} \dd t \leq \left[\int_0^T \abs{u^{(j)}(t)}^2 \dd t\right]^{\frac{1}{2}} {T}^{\frac{1}{2}}\leq T. 
\end{align}
\end{proof}  

The inequality $J(u^{(j)}) \geq J(u^{(1)})$ follows from a function ascent property in \cref{Juk-Ju0}, which as we will illustrate in the proof of  \cref{thm:main-c2},  occurs with high probability. 
Because $\norm{H_0}, \norm{\mu} \leq 1$, \cref{lemma:l1-norm} implies that
\begin{align}
  \label{eq:l1-norm-H}
  \int_0^T\, \dd \tau \norm{H(t)}_{\max} = \CO(T).
\end{align}

Now, we are ready to state the main result of this section.

\begin{theorem}
\label{thm:main-c2}
  Assume that $u(t) \in \mathrm{C}^2([0,T]) $. 
  There exists a quantum algorithm that, with probability at least $2/3$, solves \cref{prob:qoc} by visiting a first-order stationary point satisfying \cref{1-cond} using
      $\mathcal{\widetilde{O}}\left(\frac{d LT^{2} }{\epsilon^{3}}\right)$
  queries to $\mathcal{P}_{H_0}$ and $\mathcal{P}_{\mu}$, and 
    $\widetilde{\CO}\left(\frac{dnL T^2 }{\epsilon^{3}} + \frac{LT^{3/2}}{\epsilon^{5/2}}\right)$
    additional 1- and 2-qubit gates\footnote{In the gate complexity, the second term is dominated by the first; however, we keep both to demonstrate how smooth control can improve the gate complexity non-asymptotically.  (See \cref{sec:smooth}.)}. The Lipschitz constant $L$ can be bounded by \cref{eq: L}.
\end{theorem}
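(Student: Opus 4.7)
The plan is to assemble the four preliminary results into the pipeline summarized in \cref{alg1}: use the time-dependent Hamiltonian simulation of \cref{thm:hamiltoniansimulation} to prepare $\ket{\psi_N}$, build a probability oracle for $\widetilde{J}_1$ from this preparation, convert it to a phase oracle via \cref{lemma:prob-to-phase}, feed that into the finite-difference gradient estimator of \cref{lemma:grad-est-J}, and finally invoke the convergence guarantee of \cref{thm:optimization}. The discretization parameter is fixed at $N=\Theta(T^{3/2}/\epsilon^{1/2})$ as mandated by \cref{eq: eps2N}, so that the control-function and quadrature errors are both $O(\epsilon)$. Since $\widetilde{J}_2$ is a deterministic polynomial in $\bd{u}$ that can be differentiated classically in $O(N)$ time, the entire quantum budget is devoted to $\widetilde{J}_1$.

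First I would construct a probability oracle for the rescaled observable $f(\bd{u}) := (\widetilde{J}_1(\bd{u})+1)/2\in[0,1]$. Given a classical description of $\bd{u}$ (loaded into an ancillary register), the time-dependent Hamiltonian $H(t)=H_0-u_I(t)\mu$ is efficiently row-computable from $\mathcal{P}_{H_0}$, $\mathcal{P}_\mu$, and the piecewise-linear interpolation of $\bd{u}$. Applying the Dyson-series simulator of \cref{thm:hamiltoniansimulation} to $U_{\psi_0}\ket{0}$ yields an $\epsilon'$-approximation of $\ket{\psi_N}$ using $\widetilde{\CO}(d\int_0^T\norm{H(\tau)}_{\max}\dd\tau)=\widetilde{\CO}(dT)$ oracle queries and $\widetilde{\CO}(dnT)$ additional gates, where the $L_1$-norm bound uses \cref{lemma:l1-norm} together with \cref{eq:l1-norm-H}. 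A standard Hadamard test on $O$ (block-encoded from $\mathcal{P}_O$) then makes the acceptance probability of an ancilla qubit equal to $f(\bd{u})$, producing the required probability oracle $U_f$. \cref{lemma:prob-to-phase} converts it into an $\epsilon$-accurate phase oracle $\widetilde{O}_f$ at a $\CO(\log(1/\epsilon))$ multiplicative cost.

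Second, I would feed $\widetilde{O}_f$ into \cref{lemma:grad-est-J} with target accuracy $\Theta(\epsilon)$, obtaining an estimate $\bd{g}^{(k)}$ with $\|\bd{g}^{(k)}-\nabla\widetilde{J}_1(\bd{u}^{(k)})\|\leq\epsilon$ at the cost of $\widetilde{\CO}(T/\epsilon)$ phase-oracle queries and $\widetilde{\CO}(N)=\widetilde{\CO}(T^{3/2}/\epsilon^{1/2})$ additional gates per iteration; the derivative bounds required by the lemma are supplied by \cref{lemma:partial-alpha}. Adding the exact classical gradient of $\widetilde{J}_2$ yields an $\epsilon$-accurate (and hence bias-variance $O(\epsilon^2)$) estimator of $\nabla\widetilde{J}$, satisfying the hypothesis of \cref{thm:optimization}. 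The injected Gaussian noise $\bd{\zeta}_k$ in \cref{gd} then delivers, with probability at least $2/3$, a first-order stationary point within $k_{\max}=\widetilde{\CO}(L/\epsilon^2)$ iterations, and on the same high-probability event the sequence is ascending, so $J(\bd{u}^{(j)})\geq J(\bd{u}^{(1)})$ for all $j$; this is precisely the ascent property that licenses \cref{lemma:l1-norm} and, via \cref{eq:l1-norm-H}, preserves the $\widetilde{\CO}(dT)$ per-call Hamiltonian-simulation cost throughout. Multiplying the three stages gives query complexity $k_{\max}\cdot\widetilde{\CO}(T/\epsilon)\cdot\widetilde{\CO}(dT)=\widetilde{\CO}(dLT^2/\epsilon^3)$ and gate complexity $k_{\max}\bigl[\widetilde{\CO}(T/\epsilon)\cdot\widetilde{\CO}(dnT)+\widetilde{\CO}(N)\bigr]=\widetilde{\CO}(dnLT^2/\epsilon^3+LT^{3/2}/\epsilon^{5/2})$.

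The main obstacle I expect is the chained error accounting: the Hamiltonian-simulation error in $\ket{\psi_N}$, the probability-to-phase approximation error from \cref{lemma:prob-to-phase}, and the finite-difference truncation in \cref{lemma:grad-est-J} must simultaneously be kept at $O(\epsilon)$ in a way that does not inflate the stated complexity, and the ascent property must be shown to hold throughout the iteration (not merely at a single step) so that the $L_1$-norm bound used to justify \cref{eq:l1-norm-H} remains valid uniformly in $k$. The delicate part is therefore the union-bound and precision-allocation argument that ties the inner simulation/oracle precision, the outer optimization success probability, and the a~priori control-norm bound together into a single $2/3$-confidence statement; everything else reduces to bookkeeping with the already-established lemmas.
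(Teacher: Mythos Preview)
Your proposal is essentially the paper's own argument: Hamiltonian simulation $\to$ Hadamard-test probability oracle $\to$ phase oracle via \cref{lemma:prob-to-phase} $\to$ gradient via \cref{lemma:grad-est-J} $\to$ optimization, with the same per-iteration costs $\widetilde{\CO}(dT)\cdot\widetilde{\CO}(T/\epsilon)$ and $k_{\max}=\widetilde{\CO}(L/\epsilon^2)$ multiplied out, and the ascent property invoked to keep \cref{lemma:l1-norm} valid throughout. The one place to tighten is the sentence ``satisfying the hypothesis of \cref{thm:optimization}'': the estimator coming from \cref{lemma:grad-est-J} is a high-probability $\epsilon$-accurate vector, not an unbiased sample, so the paper does not invoke \cref{thm:optimization} black-box but instead re-derives the ascent inequality \cref{Juk-Ju0} with a deterministic error term $\bd{e}_k$ and a union bound $\rho=\nu/k$ over iterations---exactly the ``delicate part'' you already flag in your last paragraph.
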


\begin{proof}
  Recall that by using \cref{thm:hamiltoniansimulation}, we can obtain $\ket{\psi_N}$ given $\ket{\psi_0}$. Now, we show how to construct the oracle to estimate the gradient of $\widetilde{J}$. 
We first construct the probability oracle $U_{\widetilde{J}_1}$ for $\widetilde{J}_1$ by a Hadamard test circuit. Note that this oracle is well-defined because we have assumed that $\norm{O} \leq 1$. 
In addition, we need access to a block-encoding $U_O$ of $O$, which is a unitary of the form
\begin{align}
  U_O = 
  \begin{bmatrix}
    O & \cdot\\
    \cdot & \cdot
  \end{bmatrix}.
\end{align}
Hence, $\bra{0}\bra{\psi_N}U_O\ket{0}\ket{\psi_N} = \bra{\psi_N}O\ket{\psi_N}$. 
Let c-$U_O$ denote the controlled $U_O$. 
The Hadamard test circuit $(H\otimes I)(\text{c-}U_O)(H\otimes I)$ acting on $\ket{0}\ket{\psi_N}$ produces,
\begin{align}
  \sqrt{f(\bm{u})}\ket{1}\ket{\phi_1(\bm{u}} + \sqrt{1-f(\bm{u})}\ket{0}\ket{\phi_0(\bm{u}},
\end{align}
for
  $f(\bm{u}) \coloneqq -\frac{1}{2}\bra{\psi_N}O\ket{\psi_N} + \frac{1}{2}$.
  Note that we actually constructed the probability oracle for $f(\bm{u})$ instead of $\bra{\psi_N}O\ket{\psi_N}$. However, this is fine because the gradient of the latter is the gradient of the former multiplied by a factor $-1/2$. In addition, we note that the block-encoding $U_O$ can be efficiently constructed using the sparse-access oracles for $O$ (see Lemma 48 of \cite{GSLW19}). It is important to note that the implementation of the time-dependent Hamiltonian simulation by using \cref{thm:hamiltoniansimulation} uses some control variable $\bm{u}$ as part of the input. Such Hamiltonian simulation circuits can be made coherent, i.e., it can be used when the register containing the control variable $\bm{u}$ is in superposition. In addition, we use $U_{\psi_0}$ to prepare the initial state $\ket{\psi_0}$ from $\ket{0}$ for Hamiltonian simulation. Due to the bound on the $L_1$ norm in \cref{eq:l1-norm-H}, we can implement the probability oracle $U_{\widetilde{J}_1}$ as defined in \cref{defn:proboracle} using $\widetilde{\CO}(dT)$ queries to $\mathcal{P}_{H_0}$ and $\mathcal{P}_{\mu}$. We highlight that the oracle for the time-dependent Hamiltonian can be efficiently implemented. The values of the control function away from the nodal points can be easily computed by linear interpolation. Thus, we store $\CO(N)$ parameters in QRAM to construct the input oracle. The gate complexity of the addressing scheme for QRAM is $\widetilde{\CO}(N)$, which is polynomial in $T$ and $\epsilon$ due to \cref{eq: N-est}. The circuit depth of the addressing scheme is $\CO(\log N)$. As a result, considering the gate- and time-complexities of implementing QRAM, the Hamiltonian simulation is still efficient. Further, in special cases of our control models where the form of $u(t)$ is known, while the parameters in $u(t)$ need to be determined, we do not need QRAM because $u(t)$ can be computed by an efficient subroutine. 

Next, we use \cref{lemma:prob-to-phase} to construct an $\epsilon$-approximate phase oracle $\widetilde{O}_f$ defined in \cref{defn:phaseoracle} using $\CO(1)$ queries to $U_{\widetilde{J}_1}$ and hence $\widetilde{\CO}(dT)$ queries to $\mathcal{P}_{H_0}$ and $\mathcal{P}_{\mu}$.

Using \cref{lemma:grad-est-J}, we can obtain a vector $\bm{g}$ such that
    $\norm{\bm{g} - \nabla \widetilde{J}_1/2} \leq \epsilon$
with probability at least $1-\rho$ using $\widetilde{\CO}(T\log(N/\rho)/\epsilon)$ queries to $O_{\widetilde{J}_1}$. Hence, the number of queries to $\mathcal{P}_{H_0}$ and $\mathcal{P}_{\mu}$ is
\begin{align}
\label{eq:qc-forgrad}
    \widetilde{\CO}\left(\frac{dT^2\log(N/\rho)}{\epsilon}\right).
\end{align}

Now we use \cref{thm:optimization}, and analyze the error caused by the estimation error in the spectral norm. We  consider the gradient iteration method with the noise $\bd{\zeta}_k=0$, 
\begin{equation}\label{sgd}
\bd{u}^{(k+1)}  = \bd{u}^{(k)} + \eta \big( \bd{g}_k  +  \bd{\zeta}_k \big).
\end{equation}
The main departure from the optimization algorithm in \cref{gd} is that an approximation  $\bd{g}_k$  of the gradient of a loss function $\nabla J(\bd{u}^{(k)}) $ is used. We denote the error by 
\[
\bd{e}_k:= \bd{g}_k - \nabla J(\bd{u}^{(k)}).
\]
The proof of \cref{thm:optimization} in \cite{jin2021nonconvex} relies on a descent property: Lemma B.1 in \cite{jin2021nonconvex}. Following this approach, we  first show that the following inequality holds with probability at least $1-\nu$,
\begin{equation}\label{Juk-Ju0}
     J( \bd{u}^{(k)} ) \geq J( \bd{u}^{(0)} ) +\frac{\eta}{2} \sum_{j=0}^{k-1}   \norm{ \nabla J(\bd{u}^{(j)})}^2  - \frac{k\eta \epsilon^2}{4}
\end{equation}
where the algorithm in \cref{lemma:grad-est-J} is implemented with $\rho= \nu/k$. 
We leave the derivation of this inequality to \cref{sec: A2}.

In light of \cref{fobj'}, the fact that $w_j=\mathcal{O}(1)$, and the derivative bounds in \cref{lemma:partial-alpha}, we see that the Lipschitz constant of $\nabla J$ is $L=\mathcal{O}((1+\alpha)\sqrt{N}  \delta ).$ 
Now if we choose the number of iterations,
\begin{equation}
     k= 4L \frac{J^* - J( \bd{u}^{(0)} )} {\epsilon^2},
\end{equation}
and suppose that $\norm{ \nabla J(\bd{u}^{(j)})}> \epsilon, \forall j \in [k]$.  Namely, the iterations never reached a critical point. Then the descent property implies that,
    $J( \bd{u}^{(k)} ) - J( \bd{u}^{(0)} ) \geq  \frac{ k \eta \epsilon^2 }{4}$,
with probability $1-\nu$, which would lead to a contradiction
 $J( \bd{u}^{(k)} ) - J( \bd{u}^{(0)} ) > J^* - J( \bd{u}^{(0)} ).$ 
Consequently, the iterations in \cref{sgd} will fulfill the first order condition. The rest of the proof can be completed by following the proof in \cite{jin2021nonconvex}.

We use $\nu$ instead of $\rho$ in \cref{eq:qc-forgrad}. The complexity only depends on $\nu$ logarithmically. As a result, we have the claimed overall query complexity. The gate complexity for each iteration is dominated by the gradient estimation and time-dependent Hamiltonian simulation, i.e., $\widetilde{\CO}(dnLT^2/\epsilon + N) = \widetilde{\CO}(dnLT^2/\epsilon + T^{3/2}/\epsilon^{1/2})$.
\end{proof}

\begin{remark}
  The analysis in \cite{jin2021nonconvex} also considered how the interactions from PSGD escape from saddle points. However, the effect of the bias from  Jordan's algorithm on the iterations near saddles points becomes more subtle and requires further analysis. On the other hand, recent analysis of the quantum control landscape \cite{ge2021optimization} suggests that there are control problems that are free of saddle points. For these problem, our results can be applied directly to solve \cref{prob:qoc} by finding a second-order stationary point satisfying \cref{2-cond}.
\end{remark}

\section{Quantum dynamics with smooth control} 
\label{sec:smooth}

The estimates in the previous sections have been obtained based on the mild assumption that the control function $u(t)$ is only $\mathrm C^2.$  Meanwhile, there are various scenarios where the control function is parameterized using smooth functions \cite{song2022optimizing,machnes2018tunable}. For example, rather than controlling point-wise values of $u(t)$, one can express $u(t)$ as Fourier series and then the control parameters are reduced to Fourier coefficients.  One important implication is that the corresponding solution of the TDSE is smooth as well, in which case,  the algorithms presented in \cref{sec:hamtoniansimulation} can be greatly improved. 
In particular,  both the wave function $\psi(T)$ and the integral of the control function within a time interval can be approximated with arbitrary order $p$. Namely,   
\begin{equation}\label{eq: order-p}
\begin{aligned}
     &\ket{\psi(\delta )} - \ket{\psi_1} = \widetilde{\mathcal{O}}\left(\frac{\dt^{p+1}}{(p+1)!}\right), 
     \\
     &\int_0^\delta  u(t)^2 dt - \delta \sum_j w_j u_j = \widetilde{\mathcal{O}}\left(\frac{\dt^{p+1}}{(p+1)!}\right).
     \end{aligned}
\end{equation} 
for arbitrary order $p$. Without loss of generality, we have expressed the approximations here for the first time step.

For the approximation of the integral,  Gaussian quadrature can be used to obtain maximum accuracy. For the approximation of the wave function, such accuracy has been obtained in the Dyson series approach \cite{kieferova2019simulating} even without this smoothness assumption. 

To ensure that the error from the approximation of $\ket{\psi(T)}$ is within precision $\epsilon$,
we  force the one-step error to be below $\epsilon/N$. Since $p$ can be chosen arbitrarily, the step size $\delta=\CO(1).$ In light of \cref{eq: order-p}, this error bound can be achieved if we choose  $p$ as follows,
\begin{equation}
p  = \Theta\left(\frac{\log N/\epsilon}{\log\log N/\epsilon}\right) = \Theta\left(\frac{\log T/\epsilon}{\log\log T/\epsilon}\right).
\end{equation}

Consequently, the number of time steps $N=T/\delta $ is  improved from \cref{eq: N-est} to
\begin{equation}
  \label{eq:N-smooth}
N= \Theta  \left(T \frac{\log T/\epsilon}{\log\log N/\epsilon}   \right) =\widetilde{ \mathcal{O}}\left(T\right).    
\end{equation}

Overall, the smoothness of the control variable $u(t)$ allows us to use high-order approximations and it reduces the extra gates used in \cref{lemma:jordan} by a factor of $T^{1/2}/\epsilon^{1/2}$. However, the smoothness does not improve the query complexity and the overall gate complexity because the number of queries depend on $T = N\delta$, no matter how small $N$ is.

To summarize this section, we have the following theorem.

\begin{theorem}
\label{thm:main-smooth}
  Assume that $u(t)$ is smooth. 
There exists a quantum algorithm that, with probability at least $2/3$, solves \cref{prob:qoc} by visiting a first-order stationary point satisfying \cref{1-cond} using
      $\mathcal{\widetilde{O}}\left(\frac{d LT^{2} }{\epsilon^{3}}\right)$
  queries to $\mathcal{P}_{H_0}$ and $\mathcal{P}_{\mu}$, and 
    $\widetilde{\CO}\left(\frac{dn LT^2 }{\epsilon^{3}} + \frac{LT}{\epsilon^2}\right)$
  additional 1- and 2-qubit gates. The Lipschitz constant $L$ can be bounded by \cref{eq: L}.
\end{theorem}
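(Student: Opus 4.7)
The plan is to follow the proof of \cref{thm:main-c2} essentially verbatim, replacing only the pieces where the smoothness of $u(t)$ yields a better discretization. Under smoothness, both the quadrature in the penalty term and the time-marching for $\ket{\psi(T)}$ can be approximated to arbitrary order $p$ in the step size $\delta$, as in \cref{eq: order-p}. For the quadrature I would use Gauss-Legendre nodes within each interval; for the wave function I would use the Dyson-series Hamiltonian simulation of \cref{thm:hamiltoniansimulation}, which already attains a factorial-suppressed error per step. Demanding that the accumulated error after $N=T/\delta$ steps be $\CO(\epsilon)$, I set $p=\Theta(\log(T/\epsilon)/\log\log(T/\epsilon))$, which lets me take $\delta=\Theta(1)$, and hence $N=\widetilde{\CO}(T)$ as asserted in \cref{eq:N-smooth}.

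The rest of the argument is structurally identical to the $\mathrm{C}^2$ case. The probability oracle $U_{\widetilde{J}_1}$ is built by a Hadamard test on the block-encoding of $O$ applied to $\ket{\psi_N}$, converted to a phase oracle via \cref{lemma:prob-to-phase}, and fed to the higher-order finite-difference variant of Jordan's algorithm in \cref{lemma:grad-est-J}. The resulting inexact gradient drives the perturbed update \cref{sgd}; the descent property \cref{Juk-Ju0}, combined with the a priori $L_1$-norm bound from \cref{lemma:l1-norm}, implies convergence to an $\epsilon$-first-order stationary point within $k=\widetilde{\CO}(L/\epsilon^2)$ iterations exactly as in \cref{thm:main-c2}. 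Because the query cost per iteration from Hamiltonian simulation scales with $\int_0^T\norm{H(\tau)}_{\max}\,d\tau=\CO(T)$ via \cref{eq:l1-norm-H}, and $N$ only enters logarithmically through $\log(dH_{\max}T/\epsilon)$, the query count remains $\widetilde{\CO}(dLT^2/\epsilon^3)$, unchanged from \cref{thm:main-c2}.

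The improvement therefore lives entirely in the additional-gate term. Per iteration the gradient-estimation routine contributes $\widetilde{\CO}(dnT^2/\epsilon)$ from Hamiltonian simulation plus $\widetilde{\CO}(N)$ from the arithmetic inside \cref{lemma:jordan}. With smooth control $N=\widetilde{\CO}(T)$ rather than $N=\CO(T^{3/2}/\epsilon^{1/2})$, the second term collapses from $\widetilde{\CO}(T^{3/2}/\epsilon^{1/2})$ to $\widetilde{\CO}(T)$. Multiplying by $k=\widetilde{\CO}(L/\epsilon^2)$ iterations yields the claimed gate complexity $\widetilde{\CO}\!\left(dnLT^2/\epsilon^3+LT/\epsilon^2\right)$.

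The main obstacle I anticipate is technical rather than conceptual: I must verify that the higher-order Dyson simulation with a fixed step $\delta=\Theta(1)$ can still be realized as a circuit that is coherent in the control register $\bm{u}$, so that phase kickback into the gradient-estimation routine remains valid, and that the higher-order derivative bound in \cref{lemma:partial-alpha} — which carries a factor $(\delta\|\mu\|)^k$ — still controls the constant $B$ of \cref{lemma:jordan} after the change in $\delta$ (the $\delta=\CO(1)$ regime is in fact milder than the $\mathrm{C}^2$ regime here). I would also check that the values of the smooth $u(t)$ away from nodal points, needed inside the simulation, can be computed efficiently from the stored $N=\widetilde{\CO}(T)$ parameters; for parametric families such as truncated Fourier expansions this is immediate without QRAM, which if anything only strengthens the gate-count claim.
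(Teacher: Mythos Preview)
Your proposal is correct and matches the paper's approach exactly: the paper's proof is simply ``similar to that of \cref{thm:main-c2}'' with the sole modification that one takes $N$ as in \cref{eq:N-smooth}, which is precisely what you do. Your write-up is in fact considerably more detailed than the paper's two-sentence proof, and the technical concerns you flag at the end are either already handled in the proof of \cref{thm:main-c2} or, as you note, strictly easier in the $\delta=\Theta(1)$ regime.
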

\begin{proof}
  The proof is similar to that of \cref{thm:main-c2}. We use the number of steps $N$ as in \cref{eq:N-smooth} to obtain this improved gate complexity.
\end{proof}

\section{Numerical Example}

Here we consider a one-dimensional model similar to \cite{zhu_rapid_1998}. The Hamiltonian $H_0$ is a tri-diagonal matrix corresponding to a three-point discrete Laplacian operator. The operator $\mu$ and $O$ are diagonal,
\[ \mu = \sum_r r e^{-r/r_0} \ketbra{r}{r}, \quad  O = \sum_r \frac{\gamma_0}{\pi} e^{-\gamma_0^2 r^2} \ketbra{r}{r}. \]
 The step size is set to $\dt=0.02.$

In the optimization, we choose the learning rate to be $0.04$. Using  $u(t)=0$ as the initial guess, we apply \cref{gd} for 2000 iterations. \cref{fig:loss} shows that the objective function has reached a  plateau.  
As comparison, we run the algorithms with exactly computed gradient, and perturb gradients with noise. We observe that even with noise perturbations, the gradient-based algorithm still converges to the same maximum. In addition, as shown in \cref{fig:ctrl}, the resulting control variable   exhibits some statistical fluctuations; however, it still remains close to the optimal control variable, indicating its resilience to noisy perturbations.

\begin{figure}[ht]
\vskip 0.2in
\begin{center}
    \centerline{\includegraphics[width=0.8\columnwidth]{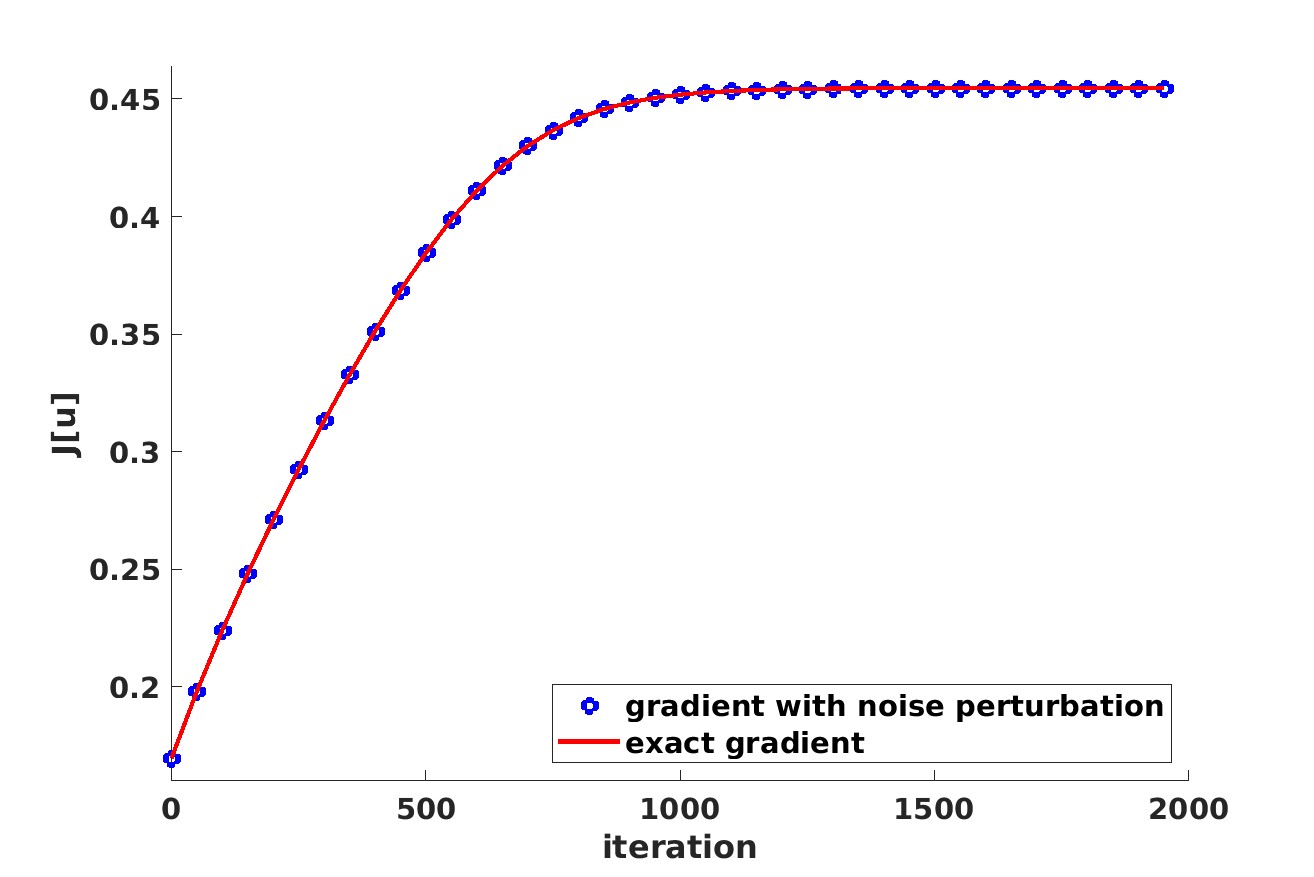}}
    \caption{The loss function during 2000 iterations.}
    \label{fig:loss}
    \end{center}
\vskip -0.2in
\end{figure}

\begin{figure}[ht]
\vskip 0.2in
\begin{center}
    \centerline{\includegraphics[width=0.8\columnwidth]{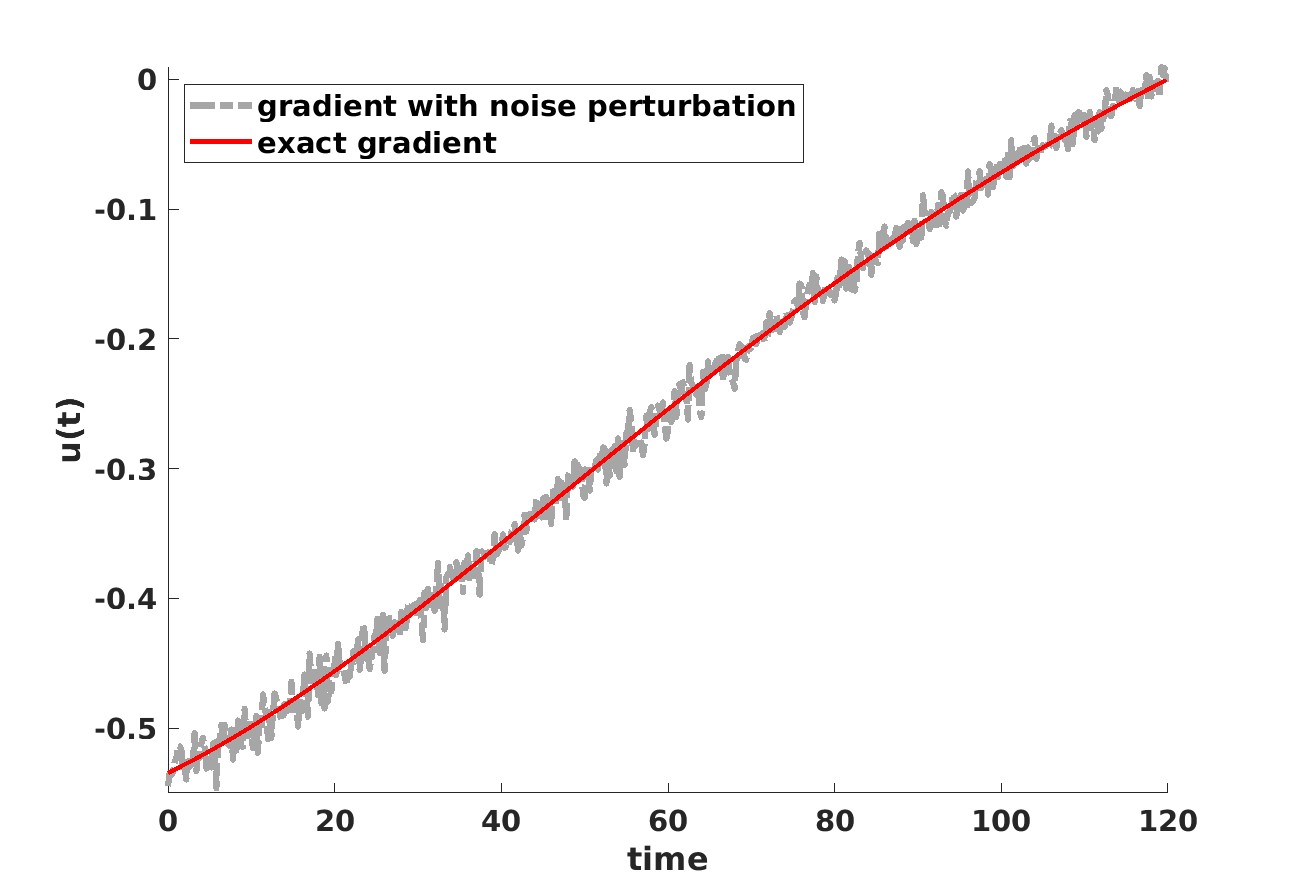}}
    \caption{The control function after 2000 iterations.}
    \label{fig:ctrl}
    \end{center}
    \vskip -0.2in
\end{figure}

\section{Acknowledgements}
We thank the anonymous reviewers for the valuable comments. We are also grateful to Wenhao He for pointing out a miscalculation of the gate complexity in an earlier version of this paper. XL's research is supported by the National Science Foundation Grants DMS-2111221. CW acknowledges support from National Science Foundation grant CCF-2238766 (CAREER). Both XL and CW were supported by a seed grant from the Institute of Computational and Data Science (ICDS).

\newpage
\bibliographystyle{plain}
\bibliography{qctrl,ctrl,qcomp}

\newpage
\appendix
\onecolumn
\section{The proof of \cref{lemma:partial-alpha} }\label{sec: A1}

\begin{proof}
  To understand how the objective function $J$ depends on the control variables, we first explicitly write the piecewise-linear interpolating function using shape functions $b_j(t)$,
\[u_I(t) = \sum_j u_j b_j(t), \]
where $b_j(t)$'s are standard hat functions. We first examine the derivatives of $\ket{\psi(T)}$ with respect to the control variables.
Following the TDSE in \cref{eq: tdse}, we can write the derivative of $\psi(t)$ with respect to $u_j$, $\phi_1(t)\coloneqq \partial_{u_j} \psi(t),$ as follows, 
\begin{equation}
     i\partial_t \phi_1 = H(t) \phi_1  - \mu b_j(t) \psi(t).
\end{equation}
Known as a variational equation \cite{deuflhard2002scientific}, such an equation characterizes the dependence of the solution on the parameters in an evolution equation. By using variation-of-constant formula, we find
\[
\phi_1(t)= i \int_0^t U(t,t') \mu \psi(t')  b_j(t') \,\dd t', 
\] 
where $U(t,t')$ is the unitary operator generated by $H(t).$
A straightforward bound thus  follows,
\[
\norm{\phi(t)} \leq \norm{\mu} \delta, \; \forall  0\leq t \leq T. 
\]
Here we used the fact that 
\[ \int_0^T \abs{b_j(t)}\, \dd t \leq  \delta.  \]

We can continue with this calculation by letting $\phi_2(t)\coloneqq \partial_{u_j} \phi_1(t)$ for a second order derivative:
\[i\partial_t \phi_2 = H(t) \phi_2  - 2 \mu b_j(t) \phi_1(t).   \]
Therefore, $\norm{\phi_2(t)} \leq 2 (\delta \norm{\mu})^2.$ Mixed derivatives can be treated similarly and they follow a similar bound. 
By defining $\phi_3(t)\coloneqq \partial_{u_j} \phi_2(t)$, we obtain,
\[i\partial_t \phi_3 = H_0 \phi_3  - 3 \mu b_j(t) \phi_2(t).   \]
By repeating this argument for higher-order derivatives, we arrive at the inequality. 
\begin{equation}
  \norm{ \frac{\partial^{\bm{\alpha}}}{\partial u_{\alpha_1} u_{\alpha_2} \cdots u_{\alpha_k} } \psi(T)  } \leq  k! \left(\delta  \|\mu\| \right)^k.
\end{equation}
Consequently, the  function $\widetilde{J}_1$ has derivative bounds,
\begin{equation}
      \begin{aligned}
            \norm{ \frac{\partial^{\bm{\alpha}} \widetilde{J}_1}{\partial u_{\alpha_1} u_{\alpha_2} \cdots u_{\alpha_k} }   }  \leq  \sum_{j=0}^k  \binom{k}{j} (k-j)! j! \left(\delta  \|\mu\| \right)^k  
              \leq (k+1)! \left(\delta  \|\mu\| \right)^k.
      \end{aligned}
\end{equation}

\end{proof}

\section{Proof of \cref{lemma:grad-est-J}}
\label{sec:proof-grad-est-J}
\begin{proof}
  First observe that $\widetilde{J}_1(\bm{u})$ depends on $\bm{u}$ smoothly. By \cref{lemma:partial-alpha}, we have
  \begin{align}
    B = \abs{\nabla_{\bm r}^{2m+1} J} = \mathcal{O}(  (2m+1)! \delta^{2m+1}).
  \end{align}
  Therefore, according to \cref{lemma:jordan}, the query complexity to achieve the error bound in the spectral norm is
  \begin{align}
    \widetilde{\CO}\left(\frac{N}{\epsilon} \frac{B^{\frac{1}{2m} } N^{\frac{1}{4m} }  }{\epsilon^{\frac{1}{2m} }}\right).
  \end{align}
We notice that by Stirling's approximation,
\begin{align}
 (2m+1)!  \approx \left( \frac{2m+1}{e} \right)^{2m+1}.
\end{align}
Therefore, it suffices to choose
\begin{align}
  m = \CO\left(\frac{\log(T^{1/2}/\epsilon^{3/4})}{\log\log(T^{1/2}/\epsilon^{3/4})}\right),
\end{align}
so that both $m$ and $(T^{1/2}/\epsilon^{3/4})^{1/(2m)}$ are poly-logarithmic factors. As a result, we have the claimed query complexity.
\end{proof}

\section{The function ascent property \cref{Juk-Ju0} }\label{sec: A2}
 We start by noticing that,
\[
\begin{aligned}
     J(\bd{u}^{(k+1)}) & \geq J( \bd{u}^{(k)} ) + \inner{\nabla J(\bd{u}^{(k)})}{\bd{u}^{(k+1)}-\bd{u}^{(k)}}  - \frac{L}{2} \norm{\bd{u}^{(k+1)}-\bd{u}^{(k)}}^2 \\
      & \geq J( \bd{u}^{(k)} ) + 
      \eta \norm{\nabla J(\bd{u}^{(k)})}^2 + \eta
      \inner{\nabla J(\bd{u}^{(k)})}{\bd{\zeta}_{k}+\bd{e}_k}  - \frac{\eta}{2} \norm{\nabla J(\bd{u}^{(k)}) +\bd{e}_k  +  \bd{\zeta}_k }^2 \\
      & \geq  J( \bd{u}^{(k)} )  + \frac{\eta}{2} \norm{ \nabla J(\bd{u}^{(k)})}^2 
       - \eta \norm{\bd{\zeta_k}}^2  - \eta\norm{\bd{e}_{k}}^2. 
\end{aligned}
\]

The first line was obtained from the Lipschitz condition on the gradient.  \cref{sgd} is then used to arrive at the second line.

By a telescoping sum, we arrive at the desired function ascent property,
\begin{equation}\label{descent}
\begin{aligned}
    J( \bd{u}^{(k)} ) \geq& J( \bd{u}^{(0)} ) +\frac{\eta}{2} \sum_{j=0}^{k-1}   \norm{ \nabla J(\bd{u}^{(j)})}^2 
    \\ 
    & \quad - \eta  \sum_{j=0}^{k-1}  \norm{\bd{\zeta_j}}^2 -  \eta  \sum_{j=0}^{k-1}  \norm{\bd{e_j}}^2.
\end{aligned}
\end{equation}

The noise term $\sum_{j=0}^{k-1}  \norm{\bd{\zeta_j}}^2$ has been treated probabilistically in \cite{jin2021nonconvex} using a concentration inequality \cite{jin2019short}, which provides a bound of $\frac{1}{8}k\epsilon^2$ with high probability.  Meanwhile, we set a failure probability $\rho=\nu/k$ in the gradient estimation such that,
\[\mathbb{P}\big(\norm{\bd{e}_j}^2 < \epsilon^2/8\big) \geq (1 - \rho),
\]
implying that
\begin{equation}
    \mathbb{P}\big(\norm{\bd{e}_j}^2 < \epsilon^2/8, \;\forall 1\leq j \leq k  \big) \geq (1 - \rho)^k \geq 1 - \nu.
\end{equation}

\end{document}